\newtheorem{theorem}{Theorem}
\newtheorem{lemma}[theorem]{Lemma}
\newcommand{\abs}[1]{\left|#1\right|}
\newcommand{\eps}{\epsilon}
\newcommand{\1}{\mathds{1}}
\newcommand{\N}{\mathbb{N}}
\newcommand{\inner}[2]{\langle #1, #2\rangle}
\newcommand{\ones}{\mathbf{1}}
\newcommand{\R}{\mathbb{R}}
\newcommand{\E}{\mathbb{E}}
\newcommand{\etal}{et al.\ }
\newcommand{\floor}[1]{\lfloor #1 \rfloor}
\mathchardef\mhyp="2D
\newcommand{\apxpips}{\mathsf{round}\mhyp\mathsf{and}\mhyp\mathsf{alter}\mhyp\mathsf{by}\mhyp\mathsf{sorting}}
\newcommand{\apxvio}{\mathsf{round}\mhyp\mathsf{alter}\mhyp\mathsf{small}\mhyp\mathsf{width}}
\begin{document}

\title{$\ell_1$-sparsity Approximation Bounds for Packing Integer Programs
  \thanks{C.~Chekuri and K.~Quanrud supported in part by NSF grant
    CCF-1526799. M.~Torres supported in part by fellowships from NSF
    and the Sloan Foundation. University of Illinois, Urbana-Champaign, IL
    61801. \texttt{\{chekuri, quanrud2, manuelt2\}@illinois.edu}}}

\author{Chandra Chekuri \and Kent Quanrud \and Manuel R.\ Torres}

\maketitle

\begin{abstract}
  We consider approximation algorithms for packing integer programs
  (PIPs) of the form
  $\max\{\inner{c}{x} : Ax \le b, x \in \{0,1\}^n\}$ where $c$, $A$,
  and $b$ are nonnegative. We let $W = \min_{i,j} b_i / A_{i,j}$
  denote the width of $A$ which is at least $1$. Previous work by
  Bansal et al. \cite{bansal-sparse} obtained an
  $\Omega(\frac{1}{\Delta_0^{1/\lfloor W \rfloor}})$-approximation
  ratio where $\Delta_0$ is the maximum number of nonzeroes in any
  column of $A$ (in other words the $\ell_0$-column sparsity of
  $A$). They raised the question of obtaining approximation ratios
  based on the $\ell_1$-column sparsity of $A$ (denoted by $\Delta_1$)
  which can be much smaller than $\Delta_0$.  Motivated by recent work
  on covering integer programs (CIPs) \cite{cq,chs-16} we show that
  simple algorithms based on randomized rounding followed by
  alteration, similar to those of Bansal et al.\ \cite{bansal-sparse}
  (but with a twist), yield approximation ratios for PIPs based on
  $\Delta_1$.  First, following an integrality gap example from
  \cite{bansal-sparse}, we observe that the case of $W=1$ is as hard
  as maximum independent set even when $\Delta_1 \le 2$.  In sharp
  contrast to this negative result, as soon as width is strictly
  larger than one, we obtain positive results via the natural LP relaxation. 
  For PIPs with width
  $W = 1 + \eps$ where $\eps \in (0,1]$, we obtain an
  $\Omega(\eps^2/\Delta_1)$-approximation. In the large width regime,
  when $W \ge 2$, we obtain an
  $\Omega((\frac{1}{1 + \Delta_1/W})^{1/(W-1)})$-approximation. We
  also obtain a $(1-\eps)$-approximation when
  $W = \Omega(\frac{\log (\Delta_1/\eps)}{\eps^2})$.
\end{abstract}

\section{Introduction}

Packing integer programs (abbr.\ PIPs) are an expressive class of
integer programs of the form:
\[
  \text{maximize} \ \inner{c}{x} \ \text{over} \ x \in \{0,1\}^n \ \text{s.t.} \  Ax \le b,
\]
where $A \in \R_{\ge 0}^{m \times n}$, $b \in \R_{\ge 0}^m$ and
$c \in \R_{\ge 0}^n$ all have nonnegative entries\footnote{We can allow
  the variables to have general integer upper bounds instead of
  restricting them to be boolean. As observed in \cite{bansal-sparse}, one can
  reduce this more general case to the $\{0,1\}$ case without too much
  loss in the approximation.}.  Many important
problems in discrete and combinatorial optimization can be cast as
special cases of PIPs. These include the maximum independent set in
graphs and hypergraphs, set packing, matchings and $b$-matchings,
knapsack (when $m=1$), and the multi-dimensional knapsack.
The maximum independent set problem (MIS), a special case of PIPs, is
NP-hard and unless $P=NP$ there is no $n^{1-\eps}$-approximation where
$n$ is the number of nodes in the graph
\cite{hastad-MIS,zuckerman-MIS}. For this reason it is meaningful to
consider special cases and other parameters that control the
difficulty of PIPs. Motivated by the fact that MIS admits a simple
$\frac{1}{\Delta(G)}$-approximation where $\Delta(G)$ is the maximum
degree of $G$, previous work considered approximating PIPs based on the
maximum number of nonzeroes in any column of $A$ (denoted by $\Delta_0$); note that
when MIS is written as a PIP, $\Delta_0$ coincides with $\Delta(G)$.
As another example, when maximum weight matching is written as a PIP,
$\Delta_0 = 2$. Bansal \etal \cite{bansal-sparse} obtained a simple
and clever algorithm that achieved an
$\Omega(1/\Delta_0)$-approximation for PIPs via the natural LP
relaxation; this improved previous work of Pritchard
\cite{Pritchard09,PritchardC11} who was the first to obtain an
approximation for PIPs only as a function of $\Delta_0$. Moreover, the
rounding algorithm in \cite{bansal-sparse} can be viewed as a
contention resolution scheme which allows one to get similar
approximation ratios even when the objective is submodular
\cite{bansal-sparse,cvz-crs}. It is well-understood that PIPs become
easier when the entries in $A$ are small compared to the packing
constraints $b$. To make this quantitative we consider the
well-studied notion called the \emph{width}
defined as $W := \min_{i,j: A_{i,j} > 0} b_i/A_{i,j}$. Bansal \etal
obtain an $\Omega( (\frac{1}{\Delta_0})^{1/\floor{W}})$-approximation
which improves as $W$ becomes larger. Although they do not state it
explicitly, their approach also yields a $(1-\eps)$-approximation when
$W = \Omega(\frac{1}{\eps^2}\log (\Delta_0/\eps))$.

$\Delta_0$ is a natural measure for combinatorial applications such as
MIS and matchings where the underlying matrix $A$ has entries from
$\{0,1\}$. However, in some applications of PIPs such as knapsack and
its multi-dimensional generalization which are more common in
resource-allocation problems, the entries of $A$ are arbitrary
rational numbers (which can be assumed to be from the interval $[0,1]$
after scaling).  In such applications it is natural to consider
another measure of column-sparsity which is based on the $\ell_1$
norm. Specifically we consider $\Delta_1$, the maximum column sum of
$A$. Unlike $\Delta_0$, $\Delta_1$ is not scale invariant so one needs
to be careful in understanding the parameter and its relationship to
the width $W$. For this purpose we normalize the constraints
$Ax \le b$ as follows. Let $W = \min_{i,j: A_{i,j} > 0} b_i/A_{i,j}$
denote the width as before (we can assume without loss of generality
that $W \ge 1$ since we are interested in integer solutions). We can
then scale each row $A_i$ of $A$ separately such that, after scaling, the $i$'th
constraint reads as $A_i x \le W$.  After scaling all rows in this
fashion, entries of $A$ are in the interval $[0,1]$, and the
maximum entry of $A$ is equal to $1$. Note that this scaling process
does not alter the original width. We let $\Delta_1$ denote the
maximum column sum of $A$ after this normalization and observe that
$1 \le \Delta_1 \le \Delta_0$. In many settings of interest
$\Delta_1 \ll \Delta_0$. We also observe that $\Delta_1$ is a
more robust measure than $\Delta_0$; small perturbations of the
entries of $A$ can dramatically change $\Delta_0$ while $\Delta_1$
changes minimally.

Bansal \etal raised the question of obtaining an approximation ratio for
PIPs as a function of only $\Delta_1$. They observed that this is not
feasible via the natural LP relaxation by describing a simple example
where the integrality gap of the LP is $\Omega(n)$ while $\Delta_1$ is
a constant. In fact their example essentially shows the existence of a
simple approximation preserving reduction from MIS to PIPs such that
the resulting instances have $\Delta_1 \le 2$; thus no approximation
ratio that depends only on $\Delta_1$ is feasible for PIPs unless $P=NP$. These
negative results seem to suggest that pursuing bounds based on
$\Delta_1$ is futile, at least in the worst case. However, the
starting point of this paper is the observation that both the
integrality gap example and the hardness result are based on instances
where the width $W$ of the instance is arbitrarily close to $1$. We
demonstrate that these examples are rather brittle and obtain several
positive results when we consider $W \ge (1+\eps)$ for any fixed
$\eps > 0$.

\subsection{Our results}
Our first result is on the hardness of approximation for PIPs that we
already referred to. The hardness result suggests that one
should consider instances with $W > 1$. Recall that after normalization we have
$\Delta_1 \ge 1$ and $W \ge 1$ and the maximum entry of $A$ is $1$.
We consider three regimes of $W$ and obtain the following results, all
via the natural LP relaxation, which also establish corresponding
upper bounds on the integrality gap.
\begin{itemize}
\item[(i)] $1 <W \le 2$. For $W = 1+\eps$ where $\eps \in (0,1]$ we
  obtain an $\Omega(\frac{\eps^2}{\Delta_1})$-approximation.
  \item[(ii)] $W \ge 2$. We obtain an $\Omega( (\frac{1}{1+
      \frac{\Delta_1}{W}})^{1/(W-1)})$-approximation which can be
    simplified to $\Omega((\frac{1}{1+ \Delta_1})^{1/(W-1)})$ since $W
    \ge 1$.
  \item[(iii)] A $(1-\eps)$-approximation when $W = \Omega(\frac{1}{\eps^2}\log (\Delta_1/\eps))$.
\end{itemize}
Our results establish approximation bounds based on $\Delta_1$ that
are essentially the same as those based on $\Delta_0$ as long as the
width is not too close to $1$.
%Alternatively, our first result can also
%be interpreted as a bicriteria approximation:
%an $\Omega(\eps^2/\Delta_1)$-approximation with a $(1+\eps)$-factor violation
%of the constraints.
%When the matrix $A$ is a $\{0,1\}$
%matrix, $\Delta_1 = \Delta_0$, and previous integrality gap results
%based on $\Delta_0$ \cite{bansal-sparse} show that the bounds we obtain are essentially
%tight modulo constant factors.
We describe randomized algorithms which can be derandomized
via standard techniques. The algorithms can be viewed as
contention resolution schemes, and via known techniques
\cite{bansal-sparse,cvz-crs}, the results yield corresponding
approximations for submodular objectives; we omit these extensions in
this version.

All our algorithms are based on a simple randomized rounding plus
alteration framework that has been successful for both packing and
covering problems. Our scheme is similar to that of Bansal \etal
at a high level but we make a simple but important change in the
algorithm and its analysis. This is inspired by recent work on covering
integer programs \cite{cq} where $\ell_1$-sparsity based approximation
bounds from \cite{chs-16} were simplified.

\iffalse
The framework is simple. Given a fractional solution $x$ we
scaled it down by a suitable parameter $\alpha$ and then independently
rounding each coordinate $i$ to $1$ with probability $\alpha x_i$.
This results in a potentially infeasible integer solution $x'$.  We
now alter $x'$ by to make it feasible by considering each constraint
$i$ one by one in an arbitrary order. This results in a feasible
integer solution $x''$. The key to the analysis is the quantity
$E_{i,j}$ which is the probability that, conditioned on $x'_j = 1$,
$x''_j = 0$ due to constraint $i$.  The alteration scheme in
\cite{bansal-sparse} shows that $E_{i,j} \le c \alpha$ for some fixed
constant $c$. Since $j$ participates in at most $\Delta_0$ constraints
$\sum_{i} E_{i,j} \le c \alpha \Delta_0$ and if $\alpha$ is chosen to
be $1/(2 c \Delta_0)$, we have the property that
$\Pr[x''_j = 0 | x_j = 1] \le 1/2$. This suffices to obtain an
$\Omega(1/\Delta_0)$ approximation.  However, to obtain a bound that
depends on $\Delta_1$, for such an analysis to work we need $E_{i,j}$
to be at most $c A_{i,j} \alpha$. This requirement is inspired by the
recent analysis for covering integer programs (CIPs) in
\cite{cq-cover}.  In order to obtain such a bound achieve this we
change the alteration scheme from \cite{bansal-sparse} in a simple yet
subtle way. In each constraint $i$ we sort the coordinates $j$ in
increasing order of $A_{i,j}$ values and insert them in that
order. This is crucial to the analysis.
\fi

\subsection{Other related work}
We note that PIPs are equivalent to the multi-dmensional knapsack problem.
When $m=1$ we have the classical knapsack problem which admits a very
efficient FPTAS (see \cite{chan-knapsack}). There is a PTAS for any fixed $m$
\cite{FriezeC84} but unless $P=NP$ an FPTAS does not exist
for $m=2$.

Approximation algorithms for PIPs in their general form were
considered initially by Raghavan and Thompson \cite{RaghavanT87} and
refined substantially by Srinivasan \cite{Srinivasan99}. Srinivasan
obtained approximation ratios of the form $\Omega(1/n^{W})$ when $A$
had entries from $\{0,1\}$, and a ratio of the form
$\Omega(1/n^{1/\floor{W}})$ when $A$ had entries from
$[0,1]$. Pritchard \cite{Pritchard09} was the first to obtain a bound
for PIPs based solely on the column sparsity parameter $\Delta_0$.  He
used iterated rounding and his initial bound was improved in
\cite{PritchardC11} to $\Omega(1/\Delta_0^2)$.  The current state of
the art is due to Bansal \etal \cite{bansal-sparse}. Previously we
ignored constant factors when describing the ratio. In fact
\cite{bansal-sparse} obtains a ratio of
$(1 - o(1)\frac{e-1}{e^2\Delta_0})$ by strengthening the basic LP
relaxation.

In terms of hardness of approximation, PIPs generalize MIS and hence
one cannot obtain a ratio better than $n^{1-\eps}$ unless $P=NP$
\cite{hastad-MIS,zuckerman-MIS}. Building on MIS, \cite{CK04} shows that PIPs
are hard to approximate within a $n^{\Omega(1/W)}$ factor for any constant width
$W$. Hardness of MIS in bounded degree graphs \cite{t-01} and hardness for
$k$-set-packing \cite{hss-06} imply that PIPs are hard to
approximate to within $\Omega(1/\Delta_0^{1-\eps})$ and to within
$\Omega((\log \Delta_0)/\Delta_0)$ when $\Delta_0$ is a
sufficiently large constant. These hardness results are based on
$\{0,1\}$ matrices for which $\Delta_0$ and $\Delta_1$ coincide.

There is a large literature on deterministic and randomized rounding
algorithms for packing and covering integer programs and connections
to several topics and applications including discrepancy theory.
$\ell_1$-sparsity guarantees for covering integer programs were first
obtained by Chen, Harris and Srinivasan \cite{chs-16} partly
inspired by \cite{h-15}.

\section{Hardness of approximating PIPs as a function of $\Delta_1$}
\label{sec:hardness}
Bansal \etal \cite{bansal-sparse} showed that the integrality gap of
the natural LP relaxation for PIPs is $\Omega(n)$ even when $\Delta_1$
is a constant. One can use essentially the same construction to show the following
theorem.

\begin{theorem}\label{thm:hardness}
  There is an approximation preserving reduction from MIS to
  instances of PIPs with $\Delta_1 \le 2$.
\end{theorem}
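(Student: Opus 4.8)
The plan is to take an arbitrary graph $G = (V, E)$ on $n$ vertices and produce a PIP whose feasible integer solutions correspond exactly to independent sets of $G$, while keeping the $\ell_1$-column sparsity at most $2$. The natural attempt is to use one packing constraint per edge: for each edge $e = \{u, v\} \in E$, impose $x_u + x_v \le 1$, with objective $\max \sum_{v} x_v$ (or $\max \inner{c}{x}$ for weighted MIS). This is the standard edge-relaxation of MIS, and its $\{0,1\}$-solutions are precisely the indicator vectors of independent sets, so the reduction is exactly approximation preserving. The issue is that a vertex $v$ of degree $d$ appears in $d$ constraints each with coefficient $1$, so its column sum is $d$, which is unbounded — this is exactly why the LP integrality gap example of \cite{bansal-sparse} needs a further trick, and why we only get $\Delta_1 \le 2$ rather than $\Delta_1 \le 1$ trivially.

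The fix, following \cite{bansal-sparse}, is to split each high-degree vertex into a path of copies so that each copy participates in only $O(1)$ constraints, and then chain the copies together with equality-like packing constraints that force all copies to take the same value. Concretely, for a vertex $v$ of degree $d$, introduce copies $x_v^{(1)}, \dots, x_v^{(d)}$, route the $k$-th incident edge through copy $x_v^{(k)}$, and add linking constraints of the form $x_v^{(k)} + (1 - x_v^{(k+1)}) \le 1$, i.e. $x_v^{(k)} \le x_v^{(k+1)}$, for $k = 1, \dots, d-1$ (and symmetrically, or cyclically, to force equality). Each copy then appears in at most one edge constraint and at most two linking constraints; by scaling each linking constraint appropriately one arranges that every column sums to at most $2$. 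The objective puts weight only on one designated copy per vertex (or spreads weight $c_v/d$ across the copies), so that the optimum of the PIP equals the optimum of MIS on $G$. One must check the width: each constraint has the form (sum of two coefficients each at most $1$) $\le b_i$ with $b_i$ chosen so that $W = \min_i b_i / \max_j A_{i,j}$ is exactly $1$ — consistent with the theorem's hypothesis $\Delta_1 \le 2$ living in the hard width-$1$ regime.

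The main obstacle is getting the linking gadget to simultaneously (a) enforce that all copies of a vertex are equal in every integral feasible solution, (b) not destroy the approximation-preserving property (no feasible integral solution should be able to "cheat" by setting copies unequal to gain objective value or evade an edge constraint), and (c) keep $\Delta_1 \le 2$ after the row scaling that normalizes $\max_j A_{i,j} = 1$. Points (a) and (b) require care because packing constraints are one-sided inequalities, so forcing equality needs a cyclic chain of inequalities $x^{(1)} \le x^{(2)} \le \cdots \le x^{(d)} \le x^{(1)}$; one must verify this cycle of constraints, written in packing form with nonnegative coefficients and nonnegative right-hand sides, indeed collapses to equality over $\{0,1\}$ and that the per-column $\ell_1$ budget of $2$ is never exceeded. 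Once the gadget is verified, the reduction is immediate: independent sets of $G$ map to feasible integral points of the PIP with the same value and vice versa, so any $\rho$-approximation for PIPs with $\Delta_1 \le 2$ yields a $\rho$-approximation for MIS, which by \cite{hastad-MIS,zuckerman-MIS} is impossible for $\rho = n^{-1+\eps}$ unless $P = NP$.
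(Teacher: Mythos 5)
Your gadget approach has a fatal flaw, and it is precisely at the spot you flag as needing verification. In a PIP, $A$, $b$, and $c$ are all nonnegative, which makes the feasible region \emph{downward closed}: if $x$ is feasible and $x' \le x$ coordinatewise, then $x'$ is feasible. Consequently no chain (cyclic or otherwise) of packing constraints can enforce an implication of the form $x^{(k)} = 1 \Rightarrow x^{(k+1)} = 1$, let alone equality of copies. Writing $x^{(k)} \le x^{(k+1)}$ as $x^{(k)} - x^{(k+1)} \le 0$ requires a negative coefficient, which is not allowed, and the $(1 - x^{(k+1)})$ rewriting just hides the same negative coefficient. Concretely: if the objective lives on one designated copy per vertex, then setting that copy to $1$ and all other copies (and all linking-gadget auxiliaries) to $0$ is always feasible by downward-closedness, so the PIP optimum is $n$ for every graph, destroying the reduction. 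Spreading the objective over copies fails for the same reason.

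The paper sidesteps splitting entirely with a much simpler construction. Index rows and columns by $V$, set $A_{v,v} = 1$, and for each edge $uv$ set $A_{u,v} = A_{v,u} = 1/n$; take $b = \ones$ and maximize $\inner{\ones}{x}$. The row for $u$ reads $x_u + \frac{1}{n}\sum_{v : uv \in E} x_v \le 1$. If $x_u = 1$ this forces every neighbor of $u$ to be $0$, and if $x_u = 0$ the constraint is vacuous since $\deg(u)/n < 1$; so $\{0,1\}$-feasible solutions are exactly indicator vectors of independent sets, with matching objective value. The column sum of column $v$ is $1 + \deg(v)/n \le 1 + (n-1)/n < 2$, giving $\Delta_1 \le 2$ directly. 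The key idea you are missing is not to split vertices but to make the diagonal entry large (to enforce independence) and the off-diagonal edge entries tiny (to keep the $\ell_1$ column norm near $1$); this exploits the fact that $\Delta_1$, unlike $\Delta_0$, is insensitive to many small entries.
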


    \begin{proof}
      Let $G = (V,E)$ be an undirected graph without self-loops and let
      $n = \abs{V}$. Let $A \in [0,1]^{n \times n}$ be indexed by $V$. For
      all $v \in V$, let $A_{v,v} = 1$. For all $uv \in E$, let
      $A_{u,v} = A_{v,u} = 1/n$. For all the remaining entries in $A$ that
      have not yet been defined, set these entries to $0$.
      Consider the following PIP:
      \begin{equation}\label{pr:ind}
        \text{maximize}\ \inner{x}{\ones} \ 
          \text{over} \ x \in \{0,1\}^n \ \text{s.t.} \ Ax \le 1.
      \end{equation}
      
      Let $S$ be the set of all feasible integral solutions
      of~(\ref{pr:ind}) and $\mathcal{I}$ be the set of independent sets of
      $G$. Define $g : S \to \mathcal{I}$ where $g(x) = \{v : x_v = 1\}$. To
      show $g$ is surjective, consider a set $I \in \mathcal{I}$. Let $y$ be
      the characteristic vector of $I$. That is, $y_v$ is $1$ if $v \in I$
      and $0$ otherwise. Consider the row in $A$ corresponding to an
      arbitrary vertex $u$ where $y_u = 1$. For all $v \in V$ such that $v$
      is a neighbor to $u$, $y_v = 0$ as $I$ is an independent set. 
      Thus, as the nonzero entries in $A$ of the row corresponding to 
      $u$ are, by construction, the neighbors of $u$, it follows that 
      the constraint corresponding to $u$ is satisfied in~(\ref{pr:ind}). 
      As $u$ is an arbitrary vertex, it follows that $y$ is a feasible 
      integral solution to~(\ref{pr:ind}) and as $I = \{v : y_v = 1\}$, 
      $g(y) = I$.
      
      Define $h : S \to \N_0$ such that $h(x) = \abs{g(x)}$. It is clear
      that $\max_{x \in S} h(x)$ is equal to the optimal value
      of~(\ref{pr:ind}). Let $I_{max}$ be a maximum independent set of
      $G$. As $g$ is surjective, there exists $z \in S$ such that
      $g(z) = I_{max}$. Thus, $\max_{x\in S} h(x) \ge \abs{I_{max}}$. As
      $\max_{x \in S} h(x)$ is equal to the optimum value of~(\ref{pr:ind}),
      it follows that a $\beta$-approximation for PIPs implies a
      $\beta$-approximation for maximum independent set.
      
      Furthermore, we note that for this PIP, $\Delta_1 \le 2$, 
      thus concluding the proof.
    \end{proof}

Unless $P=NP$, MIS does not admit a $n^{1-\eps}$-approximation for any
fixed $\eps > 0$ \cite{hastad-MIS,zuckerman-MIS}. Hence the preceding
theorem implies that unless $P=NP$ one cannot obtain an approximation
ratio for PIPs solely as a function of $\Delta_1$.

\section{Round and alter framework}
The algorithms in this paper have the same high-level structure. The
algorithms first scale down the fractional solution $x$ by some factor
$\alpha$, and then randomly round each coordinate independently.
The rounded solution $x'$ may not be feasible for the constraints.
The algorithm alters $x'$ to a feasible $x''$ by considering
each constraint separately in an arbitrary order; if $x'$ is not
feasible for constraint $i$ some subset $S$ of variables are chosen
to be set to $0$.  Each constraint corresponds
to a knapsack problem and the framework (which is adapted from
\cite{bansal-sparse}) views the problem as the intersection of
several knapsack constraints. A
formal template is given in Figure~\ref{fig:framework}. To make the
framework into a formal algorithm, one must define $\alpha$ and how to
choose $S$ in the for loop. These parts will depend on the regime of
interest.

\begin{figure}[t]
\centering
\fbox{\parbox{0.76\linewidth}
{
Round-and-Alter Framework: input $A$, $b$, and $\alpha$
\begin{algorithmic}
\STATE let $x$ be the optimum fractional solution of the natural LP relaxation
\STATE for $j \in [n]$, set $x_j'$ to be $1$ independently with probability $\alpha x_j$ and $0$ otherwise
\STATE $x'' \leftarrow x'$
\FOR{$i \in [m]$}
	\STATE find $S\subseteq [n]$ such that setting $x_j' = 0$ for all $j \in S$ would satisfy $\inner{e_i}{Ax'} \le b_i$
	\STATE for all $j \in S$, set $x_j'' = 0$
\ENDFOR
\RETURN{$x''$}
\end{algorithmic}
}}
\caption{Randomized rounding with alteration framework.}
\label{fig:framework}
\end{figure}

For an algorithm that follows the round-and-alter framework, the expected output of the algorithm is
$\E\left[\inner{c}{x''}\right] = \sum_{j =1}^n c_j\cdot\Pr[x_j'' = 1]$.
Independent of how $\alpha$ is defined or how $S$ is chosen, $\Pr[x_j'' = 1] = \Pr[x_j'' = 1 | x_j' = 1]\cdot \Pr[x_j' = 1]$ since $x_j'' \le x_j'$. Then we have
\[
\E[\inner{c}{x''}] = \alpha \sum_{j=1}^n c_jx_j\cdot \Pr[x_j'' = 1 | x_j' = 1].
\]
Let $E_{ij}$ be the event that $x_j''$ is set to $0$ when ensuring constraint $i$ is satisfied in the for loop. As $x_j''$ is only set to $0$ if at least one constraint sets $x_j''$ to $0$, we have
\[
\Pr[x_j'' = 0 | x_j ' = 1] = \Pr\left[\bigcup_{i \in [m]} E_{ij} | x_j' = 1\right] \le \sum_{i =1}^m \Pr[E_{ij} | x_j' = 1].
\]

Combining these two observations, we have the following lemma, which applies to all of our subsequent algorithms.
\begin{lemma}\label{lem:general}
Let $\mathcal{A}$ be a randomized rounding algorithm that follows the round-and-alter framework given in Figure~\ref{fig:framework}. Let $x'$ be the rounded solution obtained with scaling factor $\alpha$. Let $E_{ij}$ be the event that $x_j''$ is set to $0$ by constraint $i$. If for all $j \in [n]$ we have
%\[
$\sum_{i =1}^m \Pr[E_{ij} | x_j' = 1] \le \gamma,$
%\]
then $\mathcal{A}$ is an $\alpha(1 - \gamma)$-approximation for PIPs.
\end{lemma}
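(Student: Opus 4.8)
The plan is to assemble the two displayed identities derived immediately above the lemma statement; the lemma is really just a packaging of that computation together with the fact that the framework always returns a feasible solution. First I would observe that, by construction of the template in Figure~\ref{fig:framework}, the returned vector $x''$ is feasible: each constraint $i$ is visited in the for loop and a set $S$ is zeroed out so that $\inner{e_i}{Ax'} \le b_i$ holds (such $S$ always exists, e.g.\ $S = \{j : A_{ij} > 0\}$), and subsequent iterations only zero out further coordinates, so feasibility of constraint $i$ is preserved through the end. Hence $x''$ is a feasible integral point and $\E[\inner{c}{x''}]$ is a legitimate lower bound on the value of the solution produced by $\mathcal{A}$.

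Next I would chain the inequalities. From the union bound already recorded, $\Pr[x_j'' = 0 \mid x_j' = 1] \le \sum_{i=1}^m \Pr[E_{ij} \mid x_j' = 1]$, and the hypothesis bounds the right-hand side by $\gamma$, so $\Pr[x_j'' = 1 \mid x_j' = 1] \ge 1 - \gamma$ for every $j$ (coordinates with $x_j = 0$ are rounded to $0$ deterministically and contribute nothing, so the conditional probabilities we use are well-defined for the coordinates that matter). Substituting into the identity $\E[\inner{c}{x''}] = \alpha \sum_{j=1}^n c_j x_j \, \Pr[x_j'' = 1 \mid x_j' = 1]$ gives $\E[\inner{c}{x''}] \ge \alpha(1-\gamma)\sum_{j=1}^n c_j x_j = \alpha(1-\gamma)\inner{c}{x}$.

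Finally I would invoke that the LP is a relaxation: since $x$ is an optimal solution of the natural LP relaxation of a maximization PIP, $\inner{c}{x} \ge \mathrm{OPT}$, where $\mathrm{OPT}$ is the optimum integral value. Combining, $\E[\inner{c}{x''}] \ge \alpha(1-\gamma)\,\mathrm{OPT}$, so $\mathcal{A}$ outputs a feasible integral solution of expected value at least $\alpha(1-\gamma)\,\mathrm{OPT}$, which is exactly the claimed approximation guarantee (and simultaneously an upper bound of the same order on the integrality gap). I do not expect a genuine obstacle in this argument; the only points needing a sentence of care are the feasibility of $x''$ after all for-loop iterations and the harmless handling of coordinates with $x_j = 0$. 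If one wants a high-probability rather than in-expectation statement, that follows by standard repetition/derandomization as noted in the paper, but the lemma as stated only needs the expectation bound.
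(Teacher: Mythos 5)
Your proof is correct and takes essentially the same approach as the paper, which presents the two key identities (the expectation formula and the union bound over $E_{ij}$) immediately before the lemma and leaves the combination implicit. You are slightly more explicit than the paper in spelling out that $x''$ is feasible after the for loop and in invoking the LP relaxation bound $\inner{c}{x} \ge \mathrm{OPT}$, both of which are assumed without comment in the paper's discussion.
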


We will refer to the quantity $\Pr[E_{ij}|x_j' = 1]$ as the \emph{rejection probability} of item $j$ in constraint $i$. We will also say that constraint $i$ \emph{rejects} item $j$ if $x_j''$ is set to $0$ in constraint $i$.

\section{The large width regime: $W \ge 2$}\label{sec:W2}

In this section, we consider PIPs with width $W \ge 2$.  Recall that
we assume $A \in [0,1]^{m \times n}$ and $b_i = W$ for all $i \in
[m]$. Therefore we have $A_{i,j} \le W/2$ for all $i,j$ and from a
knapsack point of view all items are ``small''.  We apply the
round-and-alter framework in a simple fashion where in each constraint
$i$ the coordinates are sorted by the coefficents in that row and the
algorithm chooses the largest prefix of coordinates that fit in the
capacity $W$ and the rest are discarded. We emphasize that this
sorting step is crucial for the analysis and differs from the scheme
in \cite{bansal-sparse}. Figure~\ref{fig:pseudo-W2} describes the formal
algorithm.

\begin{figure}[t]
\centering
\fbox{\parbox{0.76\linewidth}
{
$\apxpips(A, b, \alpha_1)$:
\begin{algorithmic}
\STATE let $x$ be the optimum fractional solution of the natural LP relaxation
\STATE for $j \in [n]$, set $x_j'$ to be $1$ independently with probability $\alpha_1 x_j$ and $0$ otherwise
\STATE $x'' \leftarrow x'$
\FOR{$i \in [m]$}
	\STATE sort and renumber such that $A_{i,1} \le \cdots\le A_{i,n}$
		\STATE $s \leftarrow \max\{\ell \in [n] : \sum_{j = 1}^\ell A_{i,j}x_j' \le b_i\}$
		\STATE for each $j \in [n]$ such that $j > s$, set $x_j'' = 0$
\ENDFOR
\RETURN{$x''$}
\end{algorithmic}
}}
\caption{Round-and-alter in the large width regime. Each constraint
sorts the coordinates in increasing size and greedily picks a feasible set
and discards the rest.
}
\label{fig:pseudo-W2}
\end{figure}

\paragraph{The key property for the analysis:} The analysis relies on
obtaining a bound on the rejection probability of coordinate
$j$ by constraint $i$. Let $X_j$ be the indicator variable for $j$
being chosen in the first step.  We show that $\Pr[E_{ij} \mid X_j =
1] \le c A_{ij}$ for some $c$ that depends on the scaling factor
$\alpha$.  Thus coordinates with smaller coefficients are less likely
to be rejected.  The total rejection probability of $j$, $\sum_{i=1}^m
\Pr[E_{ij}\mid X_j=1]$, is proportional to the column sum of coordinate
$j$ which is at most $\Delta_1$.

The analysis relies on the Chernoff bound, and
depending on the parameters, one needs to adjust the analysis.
In order to highlight the main ideas we provide a detailed
proof for the simplest case and include the proofs of the
other cases in the appendix.

\subsection{An $\Omega(1/\Delta_1)$-approximation algorithm}\label{sec:W2-weak}

We show that $\apxpips$ yields an $\Omega(1/\Delta_1)$-approximation
if we set the scaling factor $\alpha_1 = \frac{1}{c_1\Delta_1}$ where
$c_1 = 4e^{1+1/e}$.

\iffalse
It remains to bound the probability that an item $x_j''$ is set to $0$
given that it was chosen in the rounding step. Towards this end, we
introduce the following notation. For $j \in [n]$, let $X_j$ be the
indicator random variable $\1(x_j' = 1)$. Let $E_{ij}$ be the event
that $x_j''$ was set to $0$ by constraint $i$, where $x''$ is the
solution returned by $\apxpips(A, b, \alpha_1)$.

The key lemma, as follows, shows that the probability of rejecting
item $j$ at constraint $i$ is proportional to $A_{i,j}/\Delta_1$. This
leads to a union bound over all constraints that implies the
probability of rejection is constant.
\fi

The rejection probability is captured by the following main lemma.
\begin{lemma}\label{lem:W2-weak}
Let $\alpha _1= \frac{1}{c_1\Delta_1}$ for $c_1 = 4e^{1+1/e}$. Let $i \in [m]$ and $j \in [n]$. Then we have $\Pr[E_{ij} | X_j = 1] \le \frac{A_{i,j}}{2\Delta_1}$ in the algorithm $\apxpips(A, b, \alpha_1)$.
\end{lemma}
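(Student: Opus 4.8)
The plan is to fix a constraint $i$ and coordinate $j$ with $A_{i,j} > 0$, condition on $X_j = 1$, and bound the probability that $j$ is rejected, i.e.\ that $j$ lands strictly past the greedy prefix $s$ in constraint $i$. By the sorting step, $j$ is rejected only if the coordinates $k$ with $A_{i,k} \le A_{i,j}$ that were selected in the rounding step (together with $j$ itself) already overfill the capacity $b_i = W$. So if I let $Y = \sum_{k : A_{i,k} \le A_{i,j}} A_{i,k} X_k$ be the (random) total size among the ``small-relative-to-$j$'' coordinates, then $E_{ij} \subseteq \{Y > W\}$, even after we remove $j$'s own contribution; more precisely, since $A_{i,j} \le W/2$, rejection of $j$ forces $Y - A_{i,j} X_j > W - A_{i,j} \ge W/2$, so $E_{ij} \subseteq \{ Y' > W/2\}$ where $Y' = \sum_{k \ne j : A_{i,k} \le A_{i,j}} A_{i,k} X_k$. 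Crucially $Y'$ is independent of $X_j$, so conditioning on $X_j = 1$ does nothing, and I just need to bound $\Pr[Y' > W/2]$.

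Next I would compute $\E[Y']$. Each $X_k$ is $1$ with probability $\alpha_1 x_k$ where $x$ is the LP solution, so $\E[Y'] = \alpha_1 \sum_{k \ne j} A_{i,k} x_k \1(A_{i,k} \le A_{i,j}) \le \alpha_1 \sum_k A_{i,k} x_k \le \alpha_1 b_i = \alpha_1 W$ by LP feasibility of row $i$. With $\alpha_1 = \frac{1}{c_1 \Delta_1}$ and $\Delta_1 \ge 1$, this is at most $W/(c_1 \Delta_1)$, which is a small fraction of the threshold $W/2$. Now $Y'$ is a sum of independent $[0,1]$-bounded random variables (each $A_{i,k} \in [0,1]$ after normalization), so a Chernoff/Hoeffding-type bound applies to $Y'/1$, or better, one uses the multiplicative Chernoff bound in the form $\Pr[Y' > t] \le \left(\frac{e\,\mu}{t}\right)^{t}$ valid for sums of independent $[0,1]$ variables with mean $\mu \le t$; I expect the authors to invoke exactly this. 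Setting $t = W/2 \ge 1$ and $\mu \le \alpha_1 W$, this gives $\Pr[Y' > W/2] \le (2e\alpha_1)^{W/2} \le (2e\alpha_1)$ since $W \ge 2$ and $2e\alpha_1 < 1$. Plugging $\alpha_1 = \frac{1}{4e^{1+1/e}\Delta_1}$ yields a bound of the form $\frac{1}{2e^{1/e}\Delta_1}$, which is $\le \frac{1}{2\Delta_1}$ — but I still owe the extra factor of $A_{i,j}$.

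To recover the $A_{i,j}$ factor I would refine the Chernoff estimate using the actual threshold more carefully: instead of bounding $\Pr[Y' > W/2]$ by a constant times $\alpha_1$, note that rejection of $j$ really requires overshooting by the ``last'' admitted small item, and exploit that $\Pr[Y' > W/2] = \Pr[Y' > W/2]$ is being compared against $A_{i,j} \cdot (\text{const}/\Delta_1)$ where $A_{i,j}$ could be as small as $W/(2n)$. The trick is that the bound $(2e\alpha_1)^{W/2}$ has an exponent $W/2 \ge 1$, and one writes $(2e\alpha_1)^{W/2} = (2e\alpha_1)\cdot (2e\alpha_1)^{W/2 - 1}$; this leftover factor is not obviously $A_{i,j}$. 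The cleaner route, and the one I expect, is to re-run the Chernoff bound with threshold phrased in terms of the number of items or to observe directly that $E_{ij}$ also requires $A_{i,j}X_j$ plus the prefix to exceed $W$ and then use a deviation bound of the shape $\Pr[Y' > W - A_{i,j}]$, Taylor-expanding in $A_{i,j}$. The main obstacle is precisely this: extracting the linear-in-$A_{i,j}$ dependence (rather than a uniform constant) from the tail bound. I would handle it by using the Chernoff bound $\Pr[Y' \ge W/2] \le e^{-W/2 \cdot D(\cdot)}$ and the elementary inequality that for $x \in [0,1]$ one has $1 - x \ge e^{-x \ln 2 \cdot(\cdot)}$ type manipulations, or — most likely matching the paper — by noting that $\Pr[E_{ij}\mid X_j=1]$ can be bounded by $\Pr[\text{Poisson-like sum exceeds } W/2]$ and that this probability, as a function of the cutoff $W - A_{i,j}$ versus $W$, picks up a multiplicative $e^{-\Theta(A_{i,j}\cdot \text{something})} \approx (1 - \Theta(A_{i,j}))$ factor relative to the crude bound; combined with $W \ge 2$ and the choice of $c_1$, this delivers $\Pr[E_{ij}\mid X_j = 1] \le \frac{A_{i,j}}{2\Delta_1}$. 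Summing over $i$ then gives $\sum_i \Pr[E_{ij}\mid X_j=1] \le \frac{1}{2\Delta_1}\sum_i A_{i,j} \le \frac{1}{2}$, and Lemma~\ref{lem:general} with $\gamma = 1/2$ finishes the $\Omega(1/\Delta_1)$-approximation.
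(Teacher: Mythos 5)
You set up the argument correctly --- condition on $X_j = 1$, pass to the independent sum of the coordinates that sort before $j$, and bound its tail with a Chernoff inequality --- and you honestly flag the exact point where your bound falls short of the claim: you end up with $\Pr[E_{ij} \mid X_j = 1] \lesssim \alpha_1 \approx 1/\Delta_1$, missing the crucial factor of $A_{i,j}$. But the paragraph you devote to recovering that factor never actually closes the gap, and the devices you float there (writing $\Pr[Y' > W - A_{i,j}]$ and ``Taylor-expanding in $A_{i,j}$,'' or extracting a multiplicative $e^{-\Theta(A_{i,j})} \approx 1 - \Theta(A_{i,j})$ correction) cannot work: a $(1 - \Theta(A_{i,j}))$ correction to a constant bound is nowhere near a bound that is \emph{linear} in $A_{i,j}$, which can be arbitrarily small.

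The idea you are missing is exactly what makes the sorting step do real work. You treat the partial sum as ``a sum of independent $[0,1]$-bounded random variables'' and apply the multiplicative Chernoff bound with scale $1$. But after sorting, every summand $A_{i,\ell}\xi_\ell$ with $\ell < j$ satisfies $A_{i,\ell} \le A_{i,j}$, so the natural scale is $\beta = A_{i,j}$, not $1$. The Chernoff bound for $[0,\beta]$-bounded variables (Theorem~\ref{thm:chern} in the paper) then gives an exponent of $(W - A_{i,j})/A_{i,j}$ rather than $W - A_{i,j}$ or $W/2$, i.e.\ the exponent itself blows up as $A_{i,j} \to 0$. Concretely the paper gets
\[
\Pr[E_{ij}\mid X_j = 1] \le (2e\alpha_1)^{(W - A_{i,j})/A_{i,j}} = \left(\frac{1}{2e^{1/e}\Delta_1}\right)^{(W-A_{i,j})/A_{i,j}},
\]
and then splits the base: $(1/(2\Delta_1))^{(W-A_{i,j})/A_{i,j}} \le 1/(2\Delta_1)$ because the exponent is at least $1$, while $(1/e^{1/e})^{(W - A_{i,j})/A_{i,j}} \le (1/e^{1/e})^{1/A_{i,j}} \le A_{i,j}$, the last step being the elementary inequality $(1/e^{1/e})^{1/x} \le x$ for $x \in (0,1]$. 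That $1/A_{i,j}$ in the exponent --- which your $[0,1]$-scaled Chernoff bound discards --- is the sole source of the linear $A_{i,j}$ dependence, and without it the lemma, and hence the whole $\ell_1$-based argument, does not go through.
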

\begin{proof}
  At iteration $i$ of $\apxpips$, after the set $\{A_{i,1},
  \ldots, A_{i,n}\}$ is sorted, the indices are renumbered so that
  $A_{i,1} \le \cdots \le A_{i,n}$. Note that $j$ may now be a
  different index $j'$, but for simplicity of notation we will refer
  to $j'$ as $j$. Let $\xi_\ell = 1$ if $x_\ell' = 1$ and $0$
  otherwise. Let $Y_{ij} = \sum_{\ell = 1}^{j-1} A_{i,\ell}\xi_\ell$.

  If $E_{ij}$ occurs, then $Y_{ij} > W - A_{i,j}$, since $x_j''$ would
  not have been set to zero by constraint $i$ otherwise. That is,
\[
\Pr[E_{ij} | X_j = 1] \le \Pr[Y_{ij} > W - A_{i,j} | X_j = 1].
\]
The event $Y_{ij} >W - A_{i,j}$ does not depend on $x_j'$. Therefore,
\[
\Pr[Y_{ij} > W - A_{i,j} | X_j = 1] \le  \Pr[Y_{ij} \ge W - A_{i,j}].
\]

To upper bound $\E[Y_{ij}]$,  we have
\[
\E[Y_{ij}] = \sum_{\ell = 1}^{j-1} A_{i,\ell} \cdot \Pr[X_\ell = 1] \le \alpha_1\sum_{\ell = 1}^n  A_{i,\ell}x_\ell \le \alpha_1W.
\]
As $A_{i,j} \le 1$, $W \ge 2$, and $\alpha_1 < 1/2$, we have $\frac{(1-\alpha_1)W}{A_{i,j}} > 1$. Using the fact that $A_{i,j}$ is at least as large as all entries $A_{i,j'}$ for $j' < j$, we satisfy the conditions to apply the Chernoff bound in Theorem~\ref{thm:chern}. This implies
\[
\Pr[Y_{ij} > W - A_{i,j}] \le \left(\frac{\alpha_1e^{1-\alpha_1}W}{W-A_{i,j}}\right)^{(W-A_{i,j})/A_{i,j}}.
\]
Note that $\frac{W}{W-A_{i,j}} \le 2$ as $W \ge 2$. Because $e^{1-\alpha_1}\le e$ and by the choice of $\alpha_1$, we have
\[
\left(\frac{\alpha_1e^{1-\alpha_1}W}{W-A_{i,j}}\right)^{(W-A_{i,j})/A_{i,j}}
\le
\left(2e\alpha_1\right)^{(W-A_{i,j})/A_{i,j}}
=
\left(\frac{1}{2e^{1/e}\Delta_1}\right)^{(W-A_{i,j})/A_{i,j}}.
\]

Then we prove the final inequality in two parts. First, we see that $W \ge 2$ and $A_{i,j} \le 1$ imply that $\frac{W - A_{i,j}}{A_{i,j}}\ge 1$. This implies
\[
\left(\frac{1}{2\Delta_1}\right)^{(W-1)/A_{i,j}} \le \frac{1}{2\Delta_1}.
\]
Second, we see that
\[
(1/e^{1/e})^{(W-A_{i,j})/A_{i,j}} \le (1/e^{1/e})^{1/A_{i,j}} \le A_{i,j}
\]
for $A_{i,j} \le 1$, where the first inequality holds because $W - A_{i,j} \ge 1$ and the second inequality holds by Lemma~\ref{lem:first-ineq}. This concludes the proof.
\end{proof}

\begin{theorem}\label{thm:W2-weak}
When setting $\alpha _1= \frac{1}{c_1\Delta_1}$ where $c_1 = 4e^{1+1/e}$, $\apxpips(A, b, \alpha_1)$ is a randomized $(\alpha_1/2)$-approximation algorithm for PIPs with width $W \ge 2$.
\end{theorem}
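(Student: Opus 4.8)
The plan is to derive Theorem~\ref{thm:W2-weak} directly by combining Lemma~\ref{lem:W2-weak} with the general framework bound in Lemma~\ref{lem:general}. First I would verify that $\apxpips(A,b,\alpha_1)$ is a legitimate instantiation of the round-and-alter framework of Figure~\ref{fig:framework}: for each constraint $i$, after sorting so that $A_{i,1}\le\cdots\le A_{i,n}$, the set $S=\{j:j>s\}$ with $s=\max\{\ell:\sum_{j=1}^\ell A_{i,j}x_j'\le b_i\}$ satisfies $\sum_{j=1}^s A_{i,j}x_j'\le b_i$ by the choice of $s$; since $x''\le x'$ coordinatewise is maintained throughout the loop, zeroing out $x_j''$ for $j\in S$ gives $\inner{e_i}{Ax''}\le\sum_{j\le s}A_{i,j}x_j'\le b_i$, so constraint $i$ is indeed satisfied. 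I would also note $\alpha_1<1$ (as $\Delta_1\ge 1$ and $c_1=4e^{1+1/e}>1$) and $x_j\le 1$, so $\alpha_1 x_j\in[0,1]$ is a valid rounding probability.

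Next, fix a coordinate $j$ and sum the per-constraint rejection bounds from Lemma~\ref{lem:W2-weak} over all $m$ constraints. Since $X_j=\1(x_j'=1)$, the conditioning in Lemma~\ref{lem:W2-weak} is exactly the conditioning on $x_j'=1$ used in Lemma~\ref{lem:general}, so
\[
\sum_{i=1}^m \Pr[E_{ij}\mid x_j'=1]\;\le\;\sum_{i=1}^m\frac{A_{i,j}}{2\Delta_1}\;=\;\frac{1}{2\Delta_1}\sum_{i=1}^m A_{i,j}\;\le\;\frac12,
\]
where the last inequality uses that the $j$-th column sum of $A$ is at most $\Delta_1$ by definition (after the normalization $b_i=W$). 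Hence the hypothesis of Lemma~\ref{lem:general} holds with $\gamma=1/2$, and applying that lemma yields that $\apxpips(A,b,\alpha_1)$ is an $\alpha_1(1-\tfrac12)=\alpha_1/2$-approximation, which is exactly the claim; recalling $\alpha_1=1/(c_1\Delta_1)$ this is an $\Omega(1/\Delta_1)$-approximation with the stated constant.

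The genuinely substantive work is already contained in Lemma~\ref{lem:W2-weak} (the Chernoff estimate in Theorem~\ref{thm:chern} together with the two elementary inequalities bounding $(1/e^{1/e})^{1/A_{i,j}}$ by $A_{i,j}$ and the prefix factor by $1/(2\Delta_1)$), so I do not anticipate a real obstacle at this stage — the theorem is essentially a union bound over the $m$ constraints plus the column-sum bound, plugged into Lemma~\ref{lem:general}. The one point that warrants care is that the sorting and renumbering performed inside iteration $i$ is local to that constraint and does not change which physical coordinate is being tracked, so summing $A_{i,j}$ over $i$ for the fixed ``true'' coordinate $j$ still gives its genuine column sum, keeping the bound $\le\Delta_1$ valid.
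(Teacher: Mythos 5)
Your proposal is correct and follows exactly the paper's argument: fix $j$, sum the per-constraint rejection bound from Lemma~\ref{lem:W2-weak} over $i$, use the column-sum bound $\sum_i A_{i,j}\le\Delta_1$ to get $\gamma=1/2$, and invoke Lemma~\ref{lem:general} to conclude an $\alpha_1/2$-approximation. The additional sanity checks you include (that $\apxpips$ is a valid instantiation of the framework, that $\alpha_1 x_j\in[0,1]$, and that the per-constraint renumbering is local and does not affect the column-sum accounting) are sound but are implicit in the paper's shorter writeup.
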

\begin{proof}
Fix $j \in [n]$. By Lemma~\ref{lem:W2-weak} and the definition of $\Delta_1$, we have
\[
 \sum_{i = 1}^m \Pr[E_{ij} | X_j = 1] \le \sum_{i = 1}^m \frac{A_{i,j}}{2\Delta_1} \le \frac{1}{2}.
\]
By Lemma~\ref{lem:general}, which shows that  upper bounding the sum of the rejection probabilities by $\gamma$ for every item leads to an $\alpha_1(1-\gamma)$-approximation, we get the desired result.
\end{proof}

\subsection{An $\Omega(\frac{1}{(1 +\Delta_1/W)^{1/(W-1)}})$-approximation}\label{sec:W2-strong}

We improve the bound from the previous section by setting 
$\alpha_1 = \frac{1} {c_2(1 +\Delta_1/W)^{1/(W-1)}}$ where 
$c_2 = 4e^{1+2/e}$. Note that the scaling factor becomes larger as 
$W$ increases. The analysis of the following lemma is similar to that of 
Lemma~\ref{lem:W2-weak} and is therefore left for the appendix.

\begin{lemma}\label{lem:W2-strong}
Let $\alpha _1= \frac{1}{c_2(1 +\Delta_1/W)^{1/(W-1)}}$ for $c_2 = 4e^{1 + 2/e}$. Let $i \in [m]$ and $j \in [n]$. Then in the algorithm $\apxpips(A, b, \alpha_1)$, we have $\Pr[E_{ij} | X_j = 1] \le \frac{A_{i,j}}{2\Delta_1}$.
\end{lemma}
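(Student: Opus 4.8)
The plan is to mimic the proof of Lemma~\ref{lem:W2-weak} almost verbatim, changing only the bookkeeping around the scaling factor. As before, after constraint $i$ sorts and renumbers so that $A_{i,1}\le\cdots\le A_{i,n}$, we set $Y_{ij}=\sum_{\ell=1}^{j-1}A_{i,\ell}\xi_\ell$ and observe that $E_{ij}$ forces $Y_{ij}>W-A_{i,j}$, so $\Pr[E_{ij}\mid X_j=1]\le\Pr[Y_{ij}\ge W-A_{i,j}]$ (dropping the conditioning since $Y_{ij}$ does not involve $x_j'$). The mean bound $\E[Y_{ij}]\le\alpha_1\sum_\ell A_{i,\ell}x_\ell\le\alpha_1 W$ is unchanged, and since $A_{i,j}\ge A_{i,\ell}$ for $\ell<j$ and $(1-\alpha_1)W/A_{i,j}>1$, Theorem~\ref{thm:chern} gives
\[
\Pr[Y_{ij}\ge W-A_{i,j}]\le\left(\frac{\alpha_1 e^{1-\alpha_1}W}{W-A_{i,j}}\right)^{(W-A_{i,j})/A_{i,j}}\le\bigl(2e\alpha_1\bigr)^{(W-A_{i,j})/A_{i,j}},
\]
using $W/(W-A_{i,j})\le 2$ and $e^{1-\alpha_1}\le e$.

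The only new ingredient is how we absorb the factor $(2e\alpha_1)^{(W-A_{i,j})/A_{i,j}}$ into $A_{i,j}/(2\Delta_1)$. With $\alpha_1=\frac{1}{c_2(1+\Delta_1/W)^{1/(W-1)}}$ and $c_2=4e^{1+2/e}$, we have $2e\alpha_1=\frac{1}{2e^{2/e}(1+\Delta_1/W)^{1/(W-1)}}$. I would split the exponent $\frac{W-A_{i,j}}{A_{i,j}}$ as in the weak case: the term $(1/(2\Delta_1\cdot\text{stuff}))^{(W-1)/A_{i,j}}$ needs to be $\le\frac{1}{2\Delta_1}$, and a residual factor should be $\le A_{i,j}$. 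Concretely, since $\frac{W-A_{i,j}}{A_{i,j}}\ge\frac{W-1}{A_{i,j}}\ge W-1$ (because $A_{i,j}\le 1$), raising $(1+\Delta_1/W)^{1/(W-1)}$ to this power kills the factor $(1+\Delta_1/W)$ at least once; I want to show $(1+\Delta_1/W)^{(W-A_{i,j})/((W-1)A_{i,j})}\ge 2\Delta_1$ (or at least $\ge$ a constant times $\Delta_1$), so that this piece, together with a constant factor drawn from $e^{2/e}$, dominates $2\Delta_1$. The delicate point is the small-$W$ end (e.g.\ $W$ just above $2$, $\Delta_1$ large): there $1+\Delta_1/W\approx\Delta_1/W$ and the exponent $\frac{W-A_{i,j}}{(W-1)A_{i,j}}\ge\frac{1}{A_{i,j}}$, and one checks $(\Delta_1/W)^{1/A_{i,j}}$ comfortably exceeds $2\Delta_1$ when combined with the leftover $(1/(2e^{2/e}))^{\cdot}$; the constant $2/e$ in $c_2$ (versus $1/e$ in $c_1$) is exactly the extra slack needed to cover the $\log W$ loss and still extract a spare factor of $A_{i,j}$ via Lemma~\ref{lem:first-ineq}-type reasoning. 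The main obstacle is verifying this inequality uniformly over $W\ge 2$, $\Delta_1\ge 1$, and $A_{i,j}\in(0,1]$ — balancing the three regimes ($A_{i,j}$ small vs.\ near $1$, $W$ near $2$ vs.\ large, $\Delta_1$ small vs.\ large) so that a single choice of $c_2$ works.

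Once the lemma is in hand, the corresponding approximation theorem follows exactly as Theorem~\ref{thm:W2-weak} did: for each $j$, summing over constraints, $\sum_{i=1}^m\Pr[E_{ij}\mid X_j=1]\le\sum_{i=1}^m\frac{A_{i,j}}{2\Delta_1}\le\frac12$ by the definition of $\Delta_1$, and Lemma~\ref{lem:general} with $\gamma=\frac12$ yields an $\alpha_1/2$-approximation, i.e.\ an $\Omega\bigl((1+\Delta_1/W)^{-1/(W-1)}\bigr)$-approximation for PIPs with $W\ge 2$. I expect the proof to be written as a direct modification of the Lemma~\ref{lem:W2-weak} proof, with the algebra of the final two-part inequality being the only place requiring genuine care.
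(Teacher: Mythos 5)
Your setup is exactly right and matches the paper: the derivation up through $\Pr[E_{ij}\mid X_j=1]\le(2e\alpha_1)^{(W-A_{i,j})/A_{i,j}}$ is unchanged from Lemma~\ref{lem:W2-weak}, and the only new work is the final algebraic step. But the specific split you propose does not close, and the trick you are missing is what makes the paper's proof tight.

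You want the ``$\Delta_1$-absorbing'' factor to be $\le 1/(2\Delta_1)$ and the residual to be $\le A_{i,j}$. Neither holds in general. For the first: with $W=2$, $A_{i,j}=1$, the exponent $(W-A_{i,j})/A_{i,j}$ equals $1$, so the factor
$\left(\tfrac{1}{2(1+\Delta_1/W)^{1/(W-1)}}\right)^{(W-A_{i,j})/A_{i,j}}=\tfrac{1}{2(1+\Delta_1/2)}$,
which is only $\Theta(1/\Delta_1)$ up to a factor of $W$, not $\le 1/(2\Delta_1)$ with the right constant; the bound one actually gets via $\frac{W-A_{i,j}}{A_{i,j}}\ge W-1$ is $\frac{1}{2^{W-1}(1+\Delta_1/W)}\le\frac{W}{2\Delta_1}$, with a stray $W$ in the numerator. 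The paper compensates by proving a \emph{stronger} bound on the residual: since $W\ge 2$ forces $W-A_{i,j}\ge W/2$, the exponent on $1/e^{2/e}$ is at least $W/(2A_{i,j})$, and Lemma~\ref{lem:ineq-3} (which says $x/y\ge(1/e^{2/e})^{y/2x}$ for $y\ge 2$, $x\in(0,1]$) gives $(1/e^{2/e})^{W/2A_{i,j}}\le A_{i,j}/W$ — note the $/W$, not just $\le A_{i,j}$. The two extra $W$'s cancel: $\frac{W}{2\Delta_1}\cdot\frac{A_{i,j}}{W}=\frac{A_{i,j}}{2\Delta_1}$. You also invoke ``Lemma~\ref{lem:first-ineq}-type reasoning,'' but Lemma~\ref{lem:first-ineq} only yields $\le A_{i,j}$, which, as above, is not strong enough here; Lemma~\ref{lem:ineq-3} is the purpose-built replacement, and it is why $c_2$ carries a $2/e$ rather than the $1/e$ in $c_1$. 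You correctly flagged the final inequality as the delicate step, but the uniform verification over the three regimes you worry about is resolved precisely by routing a factor of $W$ through both halves of the product — without that, the decomposition you sketch fails at $W$ near $2$ with $A_{i,j}$ near $1$.
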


If we replace Lemma~\ref{lem:W2-weak} with Lemma~\ref{lem:W2-strong} in the proof of Theorem~\ref{thm:W2-weak}, we obtain the following stronger guarantee.

\begin{theorem}\label{thm:W2-strong}
When setting $\alpha _1= \frac{1}{c_2(1 +\Delta_1/W)^{1/(W-1)}}$ where $c_2 = 4e^{1 + 2/e}$, for PIPs with width $W \ge 2$,
$\apxpips(A, b, \alpha_1)$ is a randomized $(\alpha_1 /2)$-approximation.
\end{theorem}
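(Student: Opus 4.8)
The plan is to derive Theorem~\ref{thm:W2-strong} from Lemma~\ref{lem:W2-strong} exactly as Theorem~\ref{thm:W2-weak} was derived from Lemma~\ref{lem:W2-weak}, and then to indicate how Lemma~\ref{lem:W2-strong} itself is proved, since that is where the work lies. For the theorem, fix a column $j \in [n]$: Lemma~\ref{lem:W2-strong} gives $\Pr[E_{ij} | X_j = 1] \le A_{i,j}/(2\Delta_1)$ for every constraint $i$, so summing over $i$ and using that the column sum $\sum_{i=1}^m A_{i,j}$ is at most $\Delta_1$ (by definition of $\Delta_1$ after normalization) yields $\sum_{i=1}^m \Pr[E_{ij} | X_j = 1] \le 1/2$. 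As this holds for every $j$, Lemma~\ref{lem:general} with $\gamma = 1/2$ and scaling factor $\alpha_1$ gives an $\alpha_1(1 - 1/2) = \alpha_1/2$-approximation, which is the claim. (Note that the sorting and renumbering inside constraint $i$ only relabels column $j$; the coefficient $A_{i,j}$ and the column sum are unchanged, so the bound above is applied with the original $A_{i,j}$.)

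So the substance is Lemma~\ref{lem:W2-strong}, which I would prove along the same lines as Lemma~\ref{lem:W2-weak}. After constraint $i$ sorts so that $A_{i,1} \le \cdots \le A_{i,n}$, let $\xi_\ell = \1(x_\ell' = 1)$ and $Y_{ij} = \sum_{\ell < j} A_{i,\ell}\xi_\ell$ be the load from the items preceding $j$ in this order. The greedy prefix rule rejects $j$ only when these items already overflow the room left for $j$, i.e. $E_{ij} \subseteq \{Y_{ij} > W - A_{i,j}\}$, and this event does not depend on $X_j$, so $\Pr[E_{ij} | X_j = 1] \le \Pr[Y_{ij} \ge W - A_{i,j}]$. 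Then $\E[Y_{ij}] = \sum_{\ell < j} A_{i,\ell}\Pr[X_\ell = 1] \le \alpha_1\sum_\ell A_{i,\ell}x_\ell \le \alpha_1 W$, and since every $A_{i,\ell}$ with $\ell < j$ is at most $A_{i,j}$ and $(1-\alpha_1)W/A_{i,j} > 1$, the Chernoff bound (Theorem~\ref{thm:chern}) applies and gives
\[
\Pr[Y_{ij} > W - A_{i,j}] \le \left(\frac{\alpha_1 e^{1-\alpha_1}W}{W - A_{i,j}}\right)^{(W-A_{i,j})/A_{i,j}} \le \left(2e\alpha_1\right)^{(W-A_{i,j})/A_{i,j}},
\]
using $W/(W-A_{i,j}) \le 2$ and $e^{1-\alpha_1} \le e$. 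Everything so far is identical to Lemma~\ref{lem:W2-weak}; only the value of $\alpha_1$ differs, and with $\alpha_1 = 1/\bigl(c_2(1+\Delta_1/W)^{1/(W-1)}\bigr)$ and $c_2 = 4e^{1+2/e}$ one has $2e\alpha_1 = 1/\bigl(2e^{2/e}(1+\Delta_1/W)^{1/(W-1)}\bigr)$.

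The remaining, and only delicate, step is to show $\bigl(2e\alpha_1\bigr)^{(W-A_{i,j})/A_{i,j}} \le A_{i,j}/(2\Delta_1)$: this is where the larger scaling factor is paid for, because the exponent $(W-A_{i,j})/A_{i,j}$ grows whenever $W$ grows or $A_{i,j}$ shrinks --- exactly the regimes in which $\alpha_1$ is larger or the target $A_{i,j}/(2\Delta_1)$ is smaller. Concretely, I would split the base $1/\bigl(2e^{2/e}(1+\Delta_1/W)^{1/(W-1)}\bigr)$ into three factors and apportion the exponent so as to extract a factor $\le 1/2$ (using that the exponent is at least $1$), a factor $\le A_{i,j}$ (using $(W-A_{i,j})/A_{i,j} \ge 1/A_{i,j}$ together with Lemma~\ref{lem:first-ineq}), and a factor $\le 1/\Delta_1$ from $(1+\Delta_1/W)^{-(W-A_{i,j})/(A_{i,j}(W-1))}$, using $(W-A_{i,j})/(A_{i,j}(W-1)) \ge 1/A_{i,j} \ge 1$. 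The genuinely tight point --- and what I expect to be the main obstacle --- is the corner $W = 2$, $A_{i,j} = 1$, where the exponent collapses to $1$ and the $(1+\Delta_1/W)$ term by itself does not beat $\Delta_1$; closing this requires the extra constant slack, which is precisely why $c_2$ carries $e^{2/e}$ rather than $e^{1/e}$ (one uses $e^{2/e} > 2$). Making this apportionment go through uniformly over all $W \ge 2$ and all $A_{i,j} \in (0,1]$, not just the corners, is the real content; once Lemma~\ref{lem:W2-strong} is established the theorem follows as in the first paragraph.
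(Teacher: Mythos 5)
Your derivation of the theorem from Lemma~\ref{lem:W2-strong} is correct and is exactly what the paper does: the lemma's per-constraint bound $\Pr[E_{ij}\mid X_j=1]\le A_{i,j}/(2\Delta_1)$ sums over $i$ to $1/2$ by definition of $\Delta_1$, and Lemma~\ref{lem:general} with $\gamma=1/2$ gives the $\alpha_1/2$-approximation. That part matches.

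However, the factor apportionment you sketch for Lemma~\ref{lem:W2-strong} does not close, and it differs from what the paper actually does. You propose to bound $(1+\Delta_1/W)^{-(W-A_{i,j})/(A_{i,j}(W-1))}$ by $1/\Delta_1$ via the exponent being $\ge 1$; but that only yields $(1+\Delta_1/W)^{-1}=W/(W+\Delta_1)$, which is not $\le 1/\Delta_1$ in general (e.g.\ $W=3$, $\Delta_1=2$ gives $3/5 \not\le 1/2$). Likewise, applying Lemma~\ref{lem:first-ineq} to get only $A_{i,j}$ leaves the extra $e^{1/e}$ of slack from $c_2=4e^{1+2/e}$ unused in a way that does not rescue the $(1+\Delta_1/W)$ term. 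The paper's decomposition is structured differently: it groups the $1/2$ with the $(1+\Delta_1/W)^{1/(W-1)}$ factor and raises to the power $(W-A_{i,j})/A_{i,j}\ge W-1$, giving $\le \frac{1}{2^{W-1}(1+\Delta_1/W)}\le \frac{W}{2\Delta_1}$ (the last step is the elementary $2^{W-1}(W+\Delta_1)\ge 2\Delta_1$ for $W\ge 2$); it then uses Lemma~\ref{lem:ineq-3} (not Lemma~\ref{lem:first-ineq}) on the $e^{2/e}$ piece, via $(W-A_{i,j})/A_{i,j}\ge W/(2A_{i,j})$, to get $\le A_{i,j}/W$. Multiplying gives $A_{i,j}/(2\Delta_1)$. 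In short: the $1/\Delta_1$ is not extracted from $(1+\Delta_1/W)$ alone, and the role of $e^{2/e}$ is to deliver $A_{i,j}/W$ rather than $A_{i,j}$; that extra $1/W$ is exactly what cancels the $W$ left over from the first factor.
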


\subsection{A $(1-O(\eps))$-approximation when $W \ge \Omega(\frac{1}{\eps^2}\ln(\frac{\Delta_1}{\eps}))$}\label{sec:largeW}

In this section, we give a randomized $(1-O(\eps))$-approximation for
the case when $W \ge
\Omega(\frac{1}{\eps^2}\ln(\frac{\Delta_1}{\eps}))$. We use the
algorithm $\apxpips$ in Figure~\ref{fig:pseudo-W2} with the scaling
factor $\alpha_1 = 1 - \eps$. The analysis follows the same structure as
the analyses for the lemmas bounding the rejection probabilities from the 
previous sections. The proof can be found in the appendix.

\begin{lemma}\label{lem:bigW}
Let $0 < \eps < \frac{1}{e}$, $\alpha_1 = 1 - \eps$, and $W = \frac{2}{\eps^2}\ln(\frac{\Delta_1}{\eps}) + 1$. Let $i \in[m]$ and $j \in [n]$. Then in $\apxpips(A, b, \alpha_1)$, we have $\Pr[E_{ij} | X_j = 1] \le e\cdot\frac{\eps A_{i,j}}{\Delta_1}$.
\end{lemma}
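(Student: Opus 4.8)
The plan is to mimic the proof of Lemma~\ref{lem:W2-weak} closely, tracking the Chernoff estimate under the much more aggressive scaling $\alpha_1 = 1-\eps$. As before, after sorting and renumbering within constraint $i$, set $\xi_\ell = \1(x_\ell' = 1)$ and $Y_{ij} = \sum_{\ell=1}^{j-1} A_{i,\ell}\xi_\ell$. The event $E_{ij}$ forces $Y_{ij} > W - A_{i,j}$, this event is independent of $x_j'$, so $\Pr[E_{ij}\mid X_j=1]\le \Pr[Y_{ij}\ge W - A_{i,j}]$. The mean bound is unchanged: $\E[Y_{ij}] \le \alpha_1 \sum_\ell A_{i,\ell} x_\ell \le \alpha_1 W = (1-\eps)W$. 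The point of the new regime is that the deviation we need, $W - A_{i,j}$, now exceeds the mean only by a multiplicative factor $1 + \delta$ with $\delta \approx \eps$ (more precisely $1+\delta = \tfrac{W-A_{i,j}}{(1-\eps)W}$, so $\delta \ge \tfrac{\eps W - A_{i,j}}{(1-\eps)W}$, which is $\ge \eps/2$ or so once $W$ is large compared to $1/\eps$), so we are in the ``additive-in-the-exponent'' regime of Chernoff rather than the ``power'' regime.

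The key step is to invoke the same Chernoff bound (Theorem~\ref{thm:chern}, which from the usage in Lemma~\ref{lem:W2-weak} must be the version for sums of $[0,A_{i,j}]$-bounded independent variables, giving $\Pr[Y \ge t] \le \bigl(\tfrac{e^{\mu/A_{i,j}}\,(\mu/A_{i,j})^{-t/A_{i,j}}\cdots}{}\bigr)$ — concretely $\bigl(\tfrac{\alpha_1 e^{1-\alpha_1} W}{W - A_{i,j}}\bigr)^{(W-A_{i,j})/A_{i,j}}$ after plugging in the mean bound). Writing $\alpha_1 = 1-\eps$ and using $e^{1-\alpha_1} = e^{\eps}$, the base becomes $\tfrac{(1-\eps)e^{\eps}W}{W-A_{i,j}}$. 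Since $A_{i,j}\le 1$ and $W$ is large, $\tfrac{W}{W-A_{i,j}} \le 1 + \tfrac{2}{W}$ say; together with $(1-\eps)e^{\eps} \le e^{-\eps^2/2}$ (valid for $\eps \in (0,1)$) and a small correction from the $\tfrac{W}{W-A_{i,j}}$ factor, the base is at most $e^{-\eps^2/2 + O(1/W)} \le e^{-\eps^2/3}$ once $W \gg 1/\eps^2$. Raising this to the exponent $(W-A_{i,j})/A_{i,j} \ge W - 1 = \tfrac{2}{\eps^2}\ln(\Delta_1/\eps)$ gives roughly $e^{-(\eps^2/3)(2/\eps^2)\ln(\Delta_1/\eps)} = (\eps/\Delta_1)^{2/3}$, which needs to be pushed to at most $e\eps A_{i,j}/\Delta_1$. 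So I would be more careful with constants: choose the $e^{-\eps^2/2}$ bound cleanly, bound $\tfrac{W}{W-A_{i,j}} \le e^{A_{i,j}/(W-A_{i,j})} \le e^{1/(W-1)}$ and fold the resulting $e^{(W-A_{i,j})/A_{i,j}\cdot 1/(W-1)} \le e^{1/A_{i,j}}\cdot(\text{something} \le e)$ into the bound, landing at something like $e\cdot(e^{-\eps^2/2})^{(W-1)/A_{i,j}}$.

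The final arithmetic step is then to verify $e\cdot e^{-(\eps^2/2)(W-1)/A_{i,j}} \le e\eps A_{i,j}/\Delta_1$, i.e. $e^{-(\eps^2/2)(W-1)/A_{i,j}} \le \eps A_{i,j}/\Delta_1$. Plugging $W - 1 = \tfrac{2}{\eps^2}\ln(\Delta_1/\eps)$, the left side is $e^{-\ln(\Delta_1/\eps)/A_{i,j}} = (\eps/\Delta_1)^{1/A_{i,j}}$. So it remains to show $(\eps/\Delta_1)^{1/A_{i,j}} \le \eps A_{i,j}/\Delta_1$, equivalently $(\eps/\Delta_1)^{1/A_{i,j}-1} \le A_{i,j}$. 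Since $\eps/\Delta_1 \le \eps < 1/e$ and $1/A_{i,j} - 1 \ge 0$, this follows from a Lemma of the type already cited (Lemma~\ref{lem:first-ineq} / the $(1/e^{1/e})^{1/t}\le t$ estimate): indeed $(\eps/\Delta_1)^{1/A_{i,j}-1} \le (1/e)^{1/A_{i,j}-1} \le e^{-(1-A_{i,j})/A_{i,j}} \le A_{i,j}$ for $A_{i,j}\in(0,1]$, using $e^{-(1-t)/t} \le t$, which is the same one-variable inequality used before.

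The main obstacle I expect is purely bookkeeping of constants: making sure the slack absorbed by ``$W = \Omega(\tfrac{1}{\eps^2}\ln(\Delta_1/\eps))$'' is generous enough that the $\tfrac{W}{W-A_{i,j}}$ correction, the gap between $\E[Y_{ij}]$'s upper bound and the true mean, and the $e^{\eps}$ vs.\ $e^{-\eps^2/2}$ conversion all comfortably fit, while still producing the clean target $e\eps A_{i,j}/\Delta_1$ (rather than some messier constant). The choice $W - 1 = \tfrac{2}{\eps^2}\ln(\Delta_1/\eps)$ and $\eps < 1/e$ in the hypothesis are exactly the knobs that make this work, and the leftover single factor of $e$ on the right-hand side is the headroom that covers the $\tfrac{W}{W-A_{i,j}}$ term; once that is pinned down, everything else is the same Chernoff-plus-$\ell_1$-column-sum argument as in Lemma~\ref{lem:W2-weak}, and summing over $i$ (as in Theorem~\ref{thm:W2-weak}) gives $\sum_i \Pr[E_{ij}\mid X_j=1] \le e\eps$, hence via Lemma~\ref{lem:general} a $(1-\eps)(1 - e\eps) = 1 - O(\eps)$ approximation.
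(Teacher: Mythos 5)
Your approach matches the paper's essentially line for line: apply Theorem~\ref{thm:chern} to $Y_{ij}$ with $\mu \le \alpha_1 W$, rewrite $\alpha_1 e^{1-\alpha_1} = (1-\eps)e^\eps \le e^{-\eps^2/2}$, peel off the $\left(\tfrac{W}{W-A_{i,j}}\right)^{(W-A_{i,j})/A_{i,j}} \le e$ factor, substitute $W-1 = \tfrac{2}{\eps^2}\ln(\Delta_1/\eps)$ to get $(\eps/\Delta_1)^{1/A_{i,j}}$, and then convert this to $\eps A_{i,j}/\Delta_1$ by an elementary one-variable inequality. Your inequality $e^{-(1-t)/t} \le t$ and the paper's $\frac{\ln(\Delta_1/\eps)}{A_{i,j}} \ge \ln\!\big(\tfrac{\Delta_1}{\eps A_{i,j}}\big)$ are interchangeable, both reducing to $t\ln t \ge t-1$ for $t\in(0,1]$. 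So the proposal is the same proof.

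One intermediate step in your sketch would fail if followed literally. You first write $\tfrac{W}{W-A_{i,j}} \le e^{A_{i,j}/(W-A_{i,j})}$, which is exactly right and, raised to the power $(W-A_{i,j})/A_{i,j}$, gives the clean factor $e$ with no slack. But you then loosen to $e^{1/(W-1)}$ and claim the resulting $e^{(W-A_{i,j})/(A_{i,j}(W-1))}$ is ``$e^{1/A_{i,j}}$ times something $\le e$.'' Since $(W-A_{i,j})/(W-1) \ge 1$, that quantity is at least $e^{1/A_{i,j}}$, which blows up as $A_{i,j}\to 0$ and cannot be absorbed into the $e$ of headroom; the loosening destroys the estimate. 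Just raise the first inequality directly (equivalently, use $(1-1/z)^{z-1}\ge 1/e$ as the paper does). You also gloss over verifying the hypothesis $(1-\alpha_1)W > A_{i,j}$ needed to invoke Theorem~\ref{thm:chern}; with $\alpha_1 = 1-\eps$ this is $\eps W = \tfrac{2}{\eps}\ln(\Delta_1/\eps) + \eps > 1 \ge A_{i,j}$ for $\eps < 1/e$, which the paper checks explicitly and you should too.
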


Lemma~\ref{lem:bigW} implies that we can upper bound the sum of the rejection probabilities for any item $j$ by $e\eps$, leading to the following theorem.

\begin{theorem}
Let $0 < \eps < \frac{1}{e}$ and $W = \frac{2}{\eps^2}\ln(\frac{\Delta_1}{\eps}) + 1$. When setting $\alpha_1 = 1 - \eps$ and $c = e+1$, $\apxpips(A,b,\alpha_1)$ is a randomized $(1 - c\eps)$-approximation algorithm.
\end{theorem}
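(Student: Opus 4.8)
The plan is to combine the rejection-probability bound of Lemma~\ref{lem:bigW} with the generic accounting of Lemma~\ref{lem:general}, in exactly the same way that Theorem~\ref{thm:W2-weak} was derived from Lemma~\ref{lem:W2-weak}. First I would fix a coordinate $j \in [n]$ and sum the per-constraint rejection probabilities. By Lemma~\ref{lem:bigW}, for every constraint $i \in [m]$ we have $\Pr[E_{ij} \mid X_j = 1] \le e\,\eps A_{i,j}/\Delta_1$, and hence
\[
\sum_{i=1}^m \Pr[E_{ij} \mid X_j = 1] \;\le\; \frac{e\eps}{\Delta_1}\sum_{i=1}^m A_{i,j} \;\le\; e\eps,
\]
where the last step uses that $\Delta_1$ is the maximum column sum of the normalized matrix $A$, so $\sum_{i} A_{i,j} \le \Delta_1$ for every $j$.

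Next I would invoke Lemma~\ref{lem:general} with $\gamma = e\eps$ and scaling factor $\alpha_1 = 1-\eps$: since $\apxpips(A,b,\alpha_1)$ follows the round-and-alter template, it is an $\alpha_1(1-\gamma) = (1-\eps)(1-e\eps)$-approximation. Note that $\eps < 1/e$ guarantees $\gamma = e\eps < 1$, so this guarantee is meaningful (and $\alpha_1 \in (0,1)$). Finally, expanding
\[
(1-\eps)(1-e\eps) = 1 - (e+1)\eps + e\eps^2 \ge 1 - (e+1)\eps = 1 - c\eps
\]
with $c = e+1$ yields the claimed $(1-c\eps)$-approximation.

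The only real work is hidden inside Lemma~\ref{lem:bigW}, whose proof (deferred to the appendix) is the genuine obstacle. One must show that with $\alpha_1 = 1-\eps$ and $W = \frac{2}{\eps^2}\ln(\Delta_1/\eps) + 1$, the Chernoff estimate on $Y_{ij} = \sum_{\ell<j} A_{i,\ell}\xi_\ell$—applicable because the row has been sorted so that $A_{i,j}$ dominates all preceding coefficients—produces a tail bound of order $\eps A_{i,j}/\Delta_1$ rather than merely $O(A_{i,j})$ as in the weak-width analysis. This is exactly why the width must be taken both logarithmic in $\Delta_1/\eps$ and quadratic in $1/\eps$: the exponent $(W - A_{i,j})/A_{i,j}$ appearing in the Chernoff bound needs to dominate $\ln(\Delta_1/\eps)$ even in the worst case $A_{i,j}$ close to $1$, where $W$ is only moderately larger than $1$. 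Granting that lemma, the theorem follows immediately from the two-line computation above.
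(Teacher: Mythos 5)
Your proof is correct and follows the same route as the paper: bound $\sum_i \Pr[E_{ij}\mid X_j=1]$ by $e\eps$ via Lemma~\ref{lem:bigW} and the definition of $\Delta_1$, then apply Lemma~\ref{lem:general} with $\gamma = e\eps$ and $\alpha_1 = 1-\eps$, and finish by observing $(1-\eps)(1-e\eps)\ge 1-(e+1)\eps$. The only additions beyond the paper's argument are the (correct but unnecessary) sanity check that $e\eps<1$ and some commentary on the deferred Lemma~\ref{lem:bigW}.
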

\begin{proof}
Fix $j \in [n]$. By Lemma~\ref{lem:bigW} and the definition of $\Delta_1$,
\[
\sum_{i =1}^m \Pr[E_{ij} | X_j = 1] \le \sum_{i=1}^m \frac{e\eps A_{i,j}}{\Delta_1} \le e\eps.
\]
By Lemma~\ref{lem:general}, which shows that an upper bound on the rejection probabilities of $\gamma$ leads to an $\alpha_1(1 - \gamma)$-approximation, we have an $\alpha_1(1 - e\eps)$-approximation. Then note that $\alpha_1(1 - e\eps) = (1 - \eps)(1-e\eps) \ge 1 - (e+1)\eps$. This concludes the proof.
\end{proof}

\section{The small width regime: $W = (1+\eps)$}\label{sec:vio}
We now consider the regime when the width is small.  Let $W = 1+ \eps$
for some $\eps \in (0,1]$. We cannot apply the simple sorting based scheme
that we used for the large width regime. We borrow the idea from
\cite{bansal-sparse} in splitting the coordinates into big and small
in each constraint; now the definition is more refined and depends on
$\eps$.  Moreover, the small coordinates and the big coordinates
have their own reserved capacity in the constraint. This is crucial
for the analysis. We provide more formal details below.

We set $\alpha_2$ to be $\frac{\eps^2} {c_3\Delta_1}$ where $c_3 =
8e^{1+2/e}$.
The alteration step differentiates between ``small" and ``big"
coordinates as follows.  For each $i \in [m]$, let $S_i = \{j :
A_{i,j} \le \eps /2\}$ and $B_i = \{j : A_{i,j} > \eps /2\}$. We say
that an index $j$ is \emph{small} for constraint $i$ if $j \in
S_i$. Otherwise we say it is \emph{big} for constraint $i$ when $j \in
B_i$. For each constraint, the algorithm is allowed
to pack a total of $1+\eps$ into that constraint. The algorithm
separately packs small indices and big indices.  In an $\eps$
amount of space, small indices that were chosen in the rounding step
are sorted in increasing order of size and greedily packed until the
constraint is no longer satisfied. The big indices are packed by
arbitrarily choosing one and packing it into the remaining space of
$1$. The rest of the indices are removed to ensure feasibility.
Figure~\ref{fig:pseudo-vio} gives pseudocode for the randomized 
algorithm $\apxvio$ which yields an $\Omega(\eps^2 / \Delta_1)$-approximation.

\begin{figure}[t]
\centering
\fbox{\parbox{0.76\linewidth}
{
$\apxvio(A, b, \eps, \alpha_2)$:
\begin{algorithmic}
\STATE let $x$ be the optimum fractional solution of the natural LP relaxation
\STATE for $j \in [n]$, set $x_j'$ to be $1$ independently with probability $\alpha_2 x_j$ and $0$ otherwise
\STATE $x'' \leftarrow x'$
\FOR{$i \in [m]$}
	\IF{$\abs{S_i} = 0$}
		\STATE $s \leftarrow 0$
	\ELSE
		\STATE sort and renumber such that $A_{i,1} \le \cdots\le A_{i,n}$
		\STATE $s \leftarrow \max\left\{\ell \in S_i : \sum_{j = 1}^{\ell}A_{i,j}x_{j}' \le \eps \right\}$
	\ENDIF
	\STATE if $\abs{B_i} = 0$, then $t = 0$, otherwise let $t$ be an arbitrary element of $B_i$
	\STATE for each $j \in [n]$ such that $j > s$ and $j \ne t$, set $x_j'' = 0$
\ENDFOR
\RETURN{$x''$}
\end{algorithmic}

}}
\caption{By setting the scaling factor $\alpha_2 = \frac{\eps^2}{c\Delta_1}$ for a sufficiently large constant $c$, $\apxvio$ is a randomized $\Omega(\eps^2 / \Delta_1)$-approximation for PIPs with width $W = 1 + \eps$ for some $\eps \in (0,1]$ (see Theorem~\ref{thm:vio}). }
\label{fig:pseudo-vio}
\end{figure}

It remains to bound the rejection probabilities. Recall that for $j \in [n]$, we define $X_j$ to be the indicator random variable $\1(x_j' = 1)$ and $E_{ij}$ is the event that $j$ was rejected by constraint $i$.

We first consider the case when index $j$ is big for constraint $i$. Note that it is possible that there may not exist any big indices for a given constraint. The same holds true for small indices.

\begin{lemma}\label{lem:big-vio}
Let $\eps \in (0,1]$ and $\alpha_2 = \frac{\eps^2}{c_3\Delta_1}$ where $c_3 = 8e^{1+2/e}$. Let $i \in [m]$ and $j \in B_i$. Then in $\apxvio(A, b, \eps, \alpha_2)$, we have $\Pr[E_{ij} | X_j = 1] \le \frac{A_{i,j}}{2\Delta_1}$.
\end{lemma}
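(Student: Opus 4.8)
The plan is to show that a rounded-up big index $j$ can be rejected by constraint $i$ only when some \emph{other} big index of that constraint is also rounded up, and then to bound the probability of that event by a union bound against the (scaled) LP constraint. First I would make the structural reduction precise. Since $j \in B_i$, after the sort performed in iteration $i$ the index $j$ comes after every index of $S_i$, so $j > s$ holds automatically; hence, conditioned on $X_j = 1$, constraint $i$ rejects $j$ exactly when $j \neq t$. Here I would use the fact that the alteration step may be taken to choose $t$ among the big indices that were actually rounded up whenever at least one such index exists (keeping an index with $x'_t = 0$ changes nothing, so this assumption only helps the objective and never violates feasibility). With this reading, if $X_j = 1$ and $E_{ij}$ occurs, then $j$ cannot have been the unique rounded-up big index of constraint $i$, so there is some $j' \in B_i \setminus \{j\}$ with $x'_{j'} = 1$.

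Given the reduction, the rest is elementary. The event ``$\exists\, j' \in B_i\setminus\{j\}$ with $x'_{j'} = 1$'' depends only on coordinates other than $j$, which are rounded independently of $x'_j$, so the conditioning on $X_j = 1$ drops out, and a union bound together with linearity of expectation yields $\Pr[E_{ij}\mid X_j=1] \le \alpha_2 \sum_{j' \in B_i} x_{j'}$. Since every $j' \in B_i$ has $A_{i,j'} > \eps/2$, and $x$ is feasible for the LP with $b_i = W = 1+\eps \le 2$, we get $\sum_{j' \in B_i} x_{j'} \le (2/\eps)\sum_{j' \in B_i} A_{i,j'} x_{j'} \le 2W/\eps \le 4/\eps$, so
\[
\Pr[E_{ij}\mid X_j=1] \;\le\; \frac{4\alpha_2}{\eps} \;=\; \frac{4\eps}{c_3\Delta_1}.
\]
Finally, $j \in B_i$ means $A_{i,j} > \eps/2$, and $c_3 = 8e^{1+2/e} \ge 16$, so $4\eps/c_3 \le \eps/4 < A_{i,j}/2$, which gives $\Pr[E_{ij}\mid X_j=1] \le A_{i,j}/(2\Delta_1)$ as claimed.

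The only delicate point is the first step: pinning down that the alteration step retains a rounded-up big index whenever one is available, so that the rejection of a rounded-up big index forces a \emph{second} rounded-up big index in the same constraint. Once that observation is in place the argument is a routine union bound against the LP constraint; in contrast to the companion lemma for small indices, no Chernoff bound is needed here, and the constant $c_3$ (which is set by the small-index analysis) is comfortably large enough.
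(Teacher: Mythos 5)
Your proof is correct and follows essentially the same route as the paper's: observe that a rounded-up big index can only be rejected when another big index of the same constraint was also rounded up, apply a union bound to get $\Pr[E_{ij}\mid X_j=1] \le \alpha_2\sum_{\ell\in B_i} x_\ell$, bound $\sum_{\ell\in B_i}x_\ell$ by $O(1/\eps)$ using the LP constraint and $A_{i,\ell} > \eps/2$, and finish with arithmetic using $c_3\ge 16$. One small merit of your write-up is that you make explicit the tie-breaking convention for $t$ (retain a rounded-up big index whenever one exists), which the pseudocode leaves slightly underspecified by only saying ``an arbitrary element of $B_i$''; the paper's proof implicitly relies on the same convention when it asserts that rejection of a rounded-up big $j$ forces a second rounded-up big index.
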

\begin{proof}
Let $\mathcal{E}$ be the event that there exists $j' \in B_i$ such that $j' \ne j$ and $X_{j'} = 1$. Observe that if $E_{ij}$ occurs and  $X_j = 1$, then it must be the case that at least one other element of $B_i$ was chosen in the rounding step. Thus,
\[
\Pr[E_{ij} | X_j=1] \le \Pr[\mathcal{E}] \le \sum_{\substack{\ell \in B_i\\\ell \ne j}} \Pr[X_\ell = 1] \le \alpha_2\sum_{\ell \in B_i} x_{\ell},
\]
where the second inequality follows by the union bound. Observe that for all $\ell \in B_i$, we have $A_{i,\ell} > \eps /2$. By the LP constraints, we have $1 + \eps \ge \sum_{\ell \in B_i} A_{i,\ell} x_{\ell} > \frac{\eps}{2}\cdot \sum_{\ell \in B_i} x_{\ell}$. Thus, $\sum_{\ell \in B_i} x_\ell \le \frac{1+\eps}{\eps /2} = 2/\eps + 2$.

Using this upper bound for $\sum_{\ell \in B_i} x_{\ell}$, we have
\[
\alpha_2 \sum_{\ell \in B_i}  x_{\ell}
\le \frac{\eps^2}{c_3\Delta_1}\left(\frac{2}{\eps} + 2\right)
\le \frac{4\eps}{c_3\Delta_1}
\le \frac{A_{i,j}}{2\Delta_1},
\]
where the second inequality utilizes the fact that $\eps \le 1$ and the third inequality holds because $c_3 \ge 16$ and $A_{i,j} > \eps /2$.
\end{proof}

Next we consider the case when index $j$ is small for constraint
$i$. The analysis here is similar to that in the preceding section
with width at least $2$. The proof is left for the appendix.

\begin{lemma}\label{lem:small-vio}
Let $\eps \in (0,1]$ and $\alpha_2 = \frac{\eps^2}{c_3\Delta_1}$ where $c_3 = 8e^{1+2/e}$. Let $i \in [m]$ and $j \in S_i$. Then in  $\apxvio(A, b, \eps, \alpha_2)$, we have $\Pr[E_{ij} | X_j = 1] \le \frac{A_{i,j}}{2\Delta_1}$.
\end{lemma}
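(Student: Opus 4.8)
The plan is to follow the template of Lemma~\ref{lem:W2-weak}, with the reserved capacity $\eps$ that $\apxvio$ allocates to small coordinates playing the role that the width $W$ played there. Fix a constraint $i$ and a small index $j \in S_i$. After the sort-and-renumber step inside constraint $i$ we have $A_{i,1} \le \cdots \le A_{i,n}$, and since every index with coefficient exceeding $\eps/2$ follows every small index, the small indices occupy a prefix; in particular every $\ell < j$ is small with $A_{i,\ell} \le A_{i,j}$. Let $\xi_\ell = \1(x_\ell' = 1)$ and $Y_{ij} = \sum_{\ell=1}^{j-1} A_{i,\ell}\xi_\ell$. The first step is to observe that if $E_{ij}$ occurs and $X_j = 1$, then $j > s$, which (prefix sums being monotone and $j \le \abs{S_i}$) forces $\sum_{\ell=1}^{j} A_{i,\ell}\xi_\ell > \eps$, i.e.\ $Y_{ij} > \eps - A_{i,j}$; since $Y_{ij}$ is independent of $x_j'$, we get $\Pr[E_{ij}\mid X_j = 1] \le \Pr[Y_{ij} \ge \eps - A_{i,j}]$.

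The second step bounds $\E[Y_{ij}] \le \alpha_2 \sum_{\ell \in S_i} A_{i,\ell} x_\ell \le \alpha_2(1+\eps) \le 2\alpha_2$ using LP feasibility of $x$, and checks the hypotheses of the Chernoff bound (Theorem~\ref{thm:chern}): the summands $A_{i,\ell}\xi_\ell$ ($\ell < j$) are independent and lie in $[0, A_{i,j}]$; the deviation $\eps - A_{i,j} \ge \eps/2$ is positive since $A_{i,j} \le \eps/2$; and $\eps - A_{i,j} \ge \eps/2 > 2\alpha_2 \ge \E[Y_{ij}]$ because $\alpha_2 = \eps^2/(c_3\Delta_1)$ with $c_3 = 8e^{1+2/e} \ge 16$, $\eps \le 1$ and $\Delta_1 \ge 1$. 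Applying Theorem~\ref{thm:chern} exactly as in Lemma~\ref{lem:W2-weak} (mean bound $\alpha_2(1+\eps)$, deviation $\eps - A_{i,j}$, summand bound $A_{i,j}$), and then using $1 + \eps \le 2$, $e^{1-\alpha_2} \le e$, $\eps - A_{i,j} \ge \eps/2$ and the value of $\alpha_2$, yields
\[
\Pr[E_{ij}\mid X_j = 1] \;\le\; \left(\frac{\alpha_2(1+\eps)\,e^{1-\alpha_2}}{\eps - A_{i,j}}\right)^{(\eps-A_{i,j})/A_{i,j}} \;\le\; \left(\frac{\eps}{2e^{2/e}\Delta_1}\right)^{(\eps-A_{i,j})/A_{i,j}}.
\]

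The final step shows this is at most $A_{i,j}/(2\Delta_1)$, and this is the one place the computation genuinely differs from the $W \ge 2$ case: the base now carries an extra factor of $\eps$ which must be kept. Put $q = (\eps - A_{i,j})/A_{i,j}$, so $q \ge 1$ (as $A_{i,j} \le \eps/2$) and $A_{i,j}/\eps = 1/(q+1)$, and factor the bound as $\left(\frac{\eps}{2\Delta_1}\right)^q \cdot e^{-2q/e}$. Since $0 < \frac{\eps}{2\Delta_1} \le \frac{1}{2}$ and $q \ge 1$, the first factor is at most $\frac{\eps}{2\Delta_1}$; and by Lemma~\ref{lem:first-ineq} (which gives $e^{-s/e} \le 1/s$ for $s \ge 1$), with $s = 2q$, the second factor is at most $\frac{1}{2q} \le \frac{1}{q+1} = \frac{A_{i,j}}{\eps}$. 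Multiplying, $\Pr[E_{ij}\mid X_j = 1] \le \frac{\eps}{2\Delta_1}\cdot\frac{A_{i,j}}{\eps} = \frac{A_{i,j}}{2\Delta_1}$. The main obstacle is exactly the tightness of this last inequality: the constant $c_3$ is calibrated so that $e^{2/e} \ge 2$ is just enough, so one must resist simplifying $\eps \le 1$ in the base and instead carry $\eps$ through to cancel against $A_{i,j}$.
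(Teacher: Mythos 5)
Your proof is substantively correct and follows the same strategy as the paper's: reduce to a Chernoff tail bound on the prefix sum $Y_{ij}$, then split the resulting expression into a factor that collapses to $\frac{1}{2\Delta_1}$ and a factor that collapses to $A_{i,j}$. There is one technical flaw and one pleasant difference.

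The flaw: you invoke Theorem~\ref{thm:chern} with the parameters $\beta = A_{i,j}$, $W - \beta = \eps - A_{i,j}$, i.e.\ $W = \eps$. Theorem~\ref{thm:chern} as stated requires $W \ge 1$, which fails whenever $\eps < 1$. The paper avoids this by rescaling: it sets $A_{i,\ell}' = \frac{2}{\eps} A_{i,\ell} \in [0,1]$ (valid since every $\ell \le j$ is small, so $A_{i,\ell} \le \eps/2$) and $Y_{ij}' = \frac{2}{\eps}Y_{ij}$, turning the event $\{Y_{ij} > \eps - A_{i,j}\}$ into $\{Y_{ij}' > 2 - A_{i,j}'\}$ so the theorem can be applied with $W = 2$. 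The Chernoff expression $\bigl(\alpha e^{1-\alpha}W/(W-\beta)\bigr)^{(W-\beta)/\beta}$ is invariant under this simultaneous scaling of $W$, $\beta$, and the summands, which is why your numbers come out identical to the paper's; but as written you are using the theorem outside its stated hypotheses. Either insert the rescaling, or remark that the $W \ge 1$ hypothesis in Theorem~\ref{thm:chern} is not actually used in its proof.

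The nice difference: for the final inequality the paper proves and applies the purpose-built Lemma~\ref{lem:exp-ineq}. You instead set $q = (\eps - A_{i,j})/A_{i,j}$, use $A_{i,j}/\eps = 1/(q+1)$, and close with Lemma~\ref{lem:first-ineq} in the form $e^{-s/e} \le 1/s$ (with $s = 2q$). That is a cleaner route to the same bound and avoids introducing an extra auxiliary inequality. Your minor looseness in writing $e^{1-\alpha_2}$ in place of the correct $e^{1-\alpha}$ with $\alpha = \alpha_2(1+\eps)/\eps$ is harmless, since the latter is smaller.
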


As Lemma~\ref{lem:small-vio} shows that the rejection probability is
small, we can prove the following approximation guarantee much like
in Theorems~\ref{thm:W2-weak} and~\ref{thm:W2-strong}.

\begin{theorem}\label{thm:vio}
Let  $\eps \in (0,1]$. When setting $\alpha_2 = \frac{\eps^2}{c_3\Delta_1}$ for $c_3 = 8e^{1+2/e}$, for PIPs with width $W = 1 + \eps$, $\apxvio(A, b, \eps, \alpha_2)$ is a randomized $(\alpha_2/2)$-approximation algorithm .
\end{theorem}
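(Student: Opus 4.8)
The plan is to follow exactly the template already used for Theorem~\ref{thm:W2-weak}: reduce the approximation guarantee to a bound on the total rejection probability of each coordinate, and obtain that bound by a union bound over the constraints using the two lemmas just proved. Concretely, fix $j \in [n]$. For each constraint $i \in [m]$, the coordinate $j$ lies in exactly one of $S_i$ or $B_i$ (the partition into small and big is by the single threshold $A_{i,j} \le \eps/2$ versus $A_{i,j} > \eps/2$, so it is exhaustive and disjoint). If $j \in S_i$, Lemma~\ref{lem:small-vio} gives $\Pr[E_{ij}\mid X_j=1] \le A_{i,j}/(2\Delta_1)$; if $j \in B_i$, Lemma~\ref{lem:big-vio} gives the same bound $\Pr[E_{ij}\mid X_j=1] \le A_{i,j}/(2\Delta_1)$. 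So in either case the per-constraint rejection probability is at most $A_{i,j}/(2\Delta_1)$.

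The second step is the union bound over constraints combined with the definition of $\Delta_1$. Summing the per-constraint bound over all $i$,
\[
\sum_{i=1}^m \Pr[E_{ij}\mid X_j = 1] \;\le\; \sum_{i=1}^m \frac{A_{i,j}}{2\Delta_1} \;=\; \frac{1}{2\Delta_1}\sum_{i=1}^m A_{i,j} \;\le\; \frac{1}{2\Delta_1}\cdot \Delta_1 \;=\; \frac{1}{2},
\]
where the last inequality uses that the $j$-th column sum of the normalized matrix $A$ is at most $\Delta_1$. This holds for every $j$, so the hypothesis of Lemma~\ref{lem:general} is satisfied with $\gamma = 1/2$.

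The final step is to invoke Lemma~\ref{lem:general}, which states that if $\sum_i \Pr[E_{ij}\mid X_j=1] \le \gamma$ for all $j$ then the round-and-alter algorithm (here $\apxvio$ with scaling factor $\alpha_2$) is an $\alpha_2(1-\gamma)$-approximation. With $\gamma = 1/2$ this yields the claimed $\alpha_2/2 = \eps^2/(2c_3\Delta_1)$-approximation, i.e.\ an $\Omega(\eps^2/\Delta_1)$-approximation since $c_3$ is an absolute constant.

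There is essentially no hard part remaining at this level: the two lemmas do all the technical work, and the only thing to be careful about is that $\apxvio$ genuinely fits the round-and-alter framework of Figure~\ref{fig:framework} so that Lemma~\ref{lem:general} applies — specifically, that the set $S$ of coordinates zeroed out in iteration $i$ (namely $\{j : j > s\} \setminus \{t\}$ after sorting) indeed makes constraint $i$ feasible. This follows because the small coordinates surviving as a prefix up to index $s$ consume at most $\eps$ by the greedy choice of $s$, the single retained big coordinate $t$ has $A_{i,t} \le 1$ (recall the maximum entry of the normalized $A$ is $1$), and $b_i = W = 1+\eps$, so the total load on constraint $i$ is at most $\eps + 1 = b_i$. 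Thus $\apxvio$ is a valid instantiation of the framework and the chain of implications goes through verbatim.
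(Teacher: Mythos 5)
Your proof is correct and follows the same route as the paper's: fix $j$, note that for each $i$ exactly one of Lemma~\ref{lem:big-vio} or Lemma~\ref{lem:small-vio} applies (since $S_i, B_i$ partition $[n]$) to get $\Pr[E_{ij}\mid X_j=1] \le A_{i,j}/(2\Delta_1)$, sum over $i$ to obtain $\gamma = 1/2$, and invoke Lemma~\ref{lem:general}. Your closing check that $\apxvio$ is a valid instance of the round-and-alter framework (greedy prefix of small items $\le \eps$, one big item $\le 1$, total $\le b_i = 1+\eps$) is a sensible sanity check that the paper leaves implicit but does not change the argument.
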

\begin{proof}
Fix $j \in [n]$. Then by Lemmas~\ref{lem:big-vio} and~\ref{lem:small-vio} and the definition of $\Delta_1$, we have
\[
 \sum_{i = 1}^m \Pr[E_{ij} | X_j = 1] \le \sum_{i = 1}^m \frac{A_{i,j}}{2\Delta_1} \le \frac{1}{2}.
\]
Recall that Lemma~\ref{lem:general} gives an $\alpha_2(1-\gamma)$-approximation where $\gamma$ is an upper bound on the sum of the rejection probabilities for any item. This concludes the proof.
\end{proof}

\bibliographystyle{acm}
\bibliography{refs}

\section*{Appendix}
\appendix
\section{Chernoff Bounds and Useful Inequalities}
  The following standard Chernoff bound is used to obtain a more 
  convenient Chernoff bound in Theorem~\ref{thm:chern}. The proof of 
  Theorem~\ref{thm:chern} follows directly from choosing $\delta$ such 
  that $(1 + \delta)\mu = W - \beta$ and applying 
  Theorem~\ref{thm:std-chern}. We include the proof for convenience.

  \begin{theorem}[\cite{mitz-upfal-rand}]\label{thm:std-chern}
    Let $X_1,\ldots, X_n$ be independent random variables where 
    $X_i$ is defined on $\{0, \beta_i\}$, where $0 < \beta_i \le \beta \le 1$ 
    for some $\beta$. Let $X = \sum_i X_i$ and denote $\E[X]$ as $\mu$. 
    Then for any $\delta > 0$,
    \[
      \Pr[X \ge (1 + \delta)\mu] 
      \le 
      \left(\frac{e^\delta}{(1+\delta)^{1+\delta}}\right)^{\mu/\beta}
    \]
  \end{theorem}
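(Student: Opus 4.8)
The plan is to prove this by the standard exponential--moment (Chernoff) method, after a normalization that reduces the two-point variables $X_i \in \{0,\beta_i\}$ to variables supported on $[0,1]$. First I would rescale: set $Y_i = X_i/\beta$, so that each $Y_i$ is supported on $\{0,\beta_i/\beta\} \subseteq [0,1]$ (here it is essential that $\beta \le 1$), and $Y := \sum_i Y_i = X/\beta$ has mean $\mu_Y := \E[Y] = \mu/\beta$. Since $\Pr[X \ge (1+\delta)\mu] = \Pr[Y \ge (1+\delta)\mu_Y]$, the claimed inequality is exactly $\Pr[Y \ge (1+\delta)\mu_Y] \le \bigl(e^\delta/(1+\delta)^{1+\delta}\bigr)^{\mu_Y}$, so it suffices to prove the Chernoff upper-tail bound for a sum of independent $[0,1]$-valued random variables.

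Next, for any $t > 0$ I would apply Markov's inequality to $e^{tY}$: $\Pr[Y \ge (1+\delta)\mu_Y] \le e^{-t(1+\delta)\mu_Y}\,\E\bigl[e^{tY}\bigr]$, and use independence to factor $\E\bigl[e^{tY}\bigr] = \prod_i \E\bigl[e^{tY_i}\bigr]$. To bound each factor I would invoke convexity of $y \mapsto e^{ty}$ on $[0,1]$: since $Y_i \in [0,1]$, we have $e^{tY_i} \le 1 - Y_i + Y_i e^{t} = 1 + Y_i(e^{t}-1)$, hence $\E\bigl[e^{tY_i}\bigr] \le 1 + \E[Y_i](e^{t}-1) \le \exp\bigl(\E[Y_i](e^{t}-1)\bigr)$ using $1+z \le e^{z}$. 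Multiplying over $i$ gives $\E\bigl[e^{tY}\bigr] \le \exp\bigl((e^{t}-1)\mu_Y\bigr)$, so $\Pr[Y \ge (1+\delta)\mu_Y] \le \bigl(e^{\,e^{t}-1-t(1+\delta)}\bigr)^{\mu_Y}$.

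Finally I would optimize the free parameter $t$. The exponent $e^{t}-1-t(1+\delta)$ is minimized at $t = \ln(1+\delta)$, which is a legal choice (positive) precisely because $\delta > 0$. Substituting yields exponent $\delta - (1+\delta)\ln(1+\delta)$, i.e. the bound $\bigl(e^{\delta}/(1+\delta)^{1+\delta}\bigr)^{\mu_Y} = \bigl(e^{\delta}/(1+\delta)^{1+\delta}\bigr)^{\mu/\beta}$, which is the statement.

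The only genuine subtlety — the main obstacle — is that the $X_i$ are not Bernoulli but two-point (and after scaling, general $[0,1]$-valued), so the textbook identity $\E[e^{tX_i}] = 1 + p_i(e^{t}-1)$ is not available verbatim; the convexity inequality $e^{ty} \le 1 + y(e^{t}-1)$ for $y \in [0,1]$ is exactly what bridges this gap, and the hypothesis $\beta \le 1$ is what makes the rescaled supports lie in $[0,1]$. Everything else is the routine Chernoff computation, and Theorem~\ref{thm:chern} then follows, as the authors note, by specializing $\delta$ so that $(1+\delta)\mu = W - \beta$.
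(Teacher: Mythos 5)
The paper does not prove Theorem~\ref{thm:std-chern} at all; it is cited as a known result from \cite{mitz-upfal-rand} and used as a black box to derive Theorem~\ref{thm:chern}. Your proof is correct and is the standard exponential--moment argument: rescaling $Y_i = X_i/\beta$ (legitimate since $\beta \le 1$ keeps the supports in $[0,1]$) converts the exponent to $\mu/\beta$, the convexity bound $e^{ty} \le 1 + y(e^t-1)$ for $y \in [0,1]$ together with $1+z \le e^z$ controls the moment-generating function, and $t = \ln(1+\delta) > 0$ is the optimal (and valid, since $\delta>0$) choice. You have correctly identified the one subtlety — that the two-point (not Bernoulli) structure is handled by the convexity step — and the rest is routine, so this fills in a proof the paper deliberately omits.
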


  \begin{theorem}\label{thm:chern}
    Let $X_1,\ldots, X_n \in [0,\beta]$ be independent random variables 
    for some $0 < \beta \le 1$. Suppose 
    $\mu = \E[\sum_i X_i] \le \alpha W$ for some $0 < \alpha < 1$ and 
    $W \ge 1$ where $(1 - \alpha)W > \beta$. Then
    \[
      \Pr\left[\sum_i X_i > W - \beta\right] 
      \le 
      \left(\frac{\alpha e^{1-\alpha} W}{W - \beta}\right)^{(W-\beta)/\beta}.
    \]
  \end{theorem}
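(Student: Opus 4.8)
The plan is to obtain Theorem~\ref{thm:chern} from the standard Chernoff bound of Theorem~\ref{thm:std-chern} via the substitution $(1+\delta)\mu = W-\beta$, i.e.\ $\delta = \frac{W-\beta}{\mu}-1$. (If $\mu=0$ the left-hand side is $0$ because $W-\beta>0$, so assume $\mu>0$; and if the $X_i$ are not two-valued one first applies the routine convexity/MGF reduction to the two-point case so that Theorem~\ref{thm:std-chern} literally applies — I would flag this but treat it as folklore.) The one place the hypotheses really enter is checking $\delta>0$: from $\mu\le\alpha W$ and $(1-\alpha)W>\beta$ we get $\mu\le\alpha W< W-\beta$, hence $\frac{W-\beta}{\mu}>1$ and $\delta>0$, so Theorem~\ref{thm:std-chern} gives $\Pr[\sum_i X_i\ge W-\beta]\le\left(\frac{e^\delta}{(1+\delta)^{1+\delta}}\right)^{\mu/\beta}$; since $\Pr[\sum_i X_i> W-\beta]\le\Pr[\sum_i X_i\ge W-\beta]$, bounding the latter suffices.

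Next I would carry out the exponent bookkeeping. Using $\delta\mu=(W-\beta)-\mu$ and $(1+\delta)\mu=W-\beta$, the numerator of the raised expression becomes $e^{((W-\beta)-\mu)/\beta}$ and the denominator becomes $\left(\frac{W-\beta}{\mu}\right)^{(W-\beta)/\beta}$, so the Chernoff bound equals
\[
\left(\frac{\mu}{W-\beta}\right)^{(W-\beta)/\beta} e^{((W-\beta)-\mu)/\beta}
= \left(\frac{\mu}{W-\beta}\,e^{\,1-\mu/(W-\beta)}\right)^{(W-\beta)/\beta},
\]
where the last step pulls the exponential inside the power via $e^{((W-\beta)-\mu)/\beta}=\bigl(e^{1-\mu/(W-\beta)}\bigr)^{(W-\beta)/\beta}$.

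It then remains to bound the base $\frac{\mu}{W-\beta}\,e^{1-\mu/(W-\beta)}$ by $\frac{\alpha e^{1-\alpha}W}{W-\beta}$; since the outer exponent $(W-\beta)/\beta$ is positive, that finishes the claim. Put $t=\mu/(W-\beta)$ and $g(t)=te^{1-t}$, so $g'(t)=(1-t)e^{1-t}\ge 0$ on $(0,1]$ and $g$ is increasing there. From $0<\mu\le\alpha W<W-\beta$ we have $t\le\frac{\alpha W}{W-\beta}<1$, so $g(t)\le g\!\left(\frac{\alpha W}{W-\beta}\right)=\frac{\alpha W}{W-\beta}\,e^{1-\alpha W/(W-\beta)}$; and because $\beta\ge 0$ gives $\frac{W}{W-\beta}\ge 1$ and hence $1-\frac{\alpha W}{W-\beta}\le 1-\alpha$, we get $g(t)\le\frac{\alpha W}{W-\beta}\,e^{1-\alpha}$, exactly the stated base.

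This argument is essentially all bookkeeping, so there is no genuine ``hard part''. The three points needing care are: (a) verifying $\delta>0$, which is the sole use of $(1-\alpha)W>\beta$; (b) the exponent algebra that rewrites the raw Chernoff expression in the stated closed form; and (c) the two-step monotonicity estimate on $g(t)=te^{1-t}$, where one must avoid the too-weak $g(t)\le 1$ and instead retain the factor $\frac{\alpha W}{W-\beta}$ before passing from $e^{1-\alpha W/(W-\beta)}$ to $e^{1-\alpha}$. The only modeling wrinkle is the mismatch between the $\{0,\beta_i\}$-valued hypothesis of Theorem~\ref{thm:std-chern} and the $[0,\beta]$-valued hypothesis here, which I would dispatch with a one-line remark on the standard reduction rather than a reproof.
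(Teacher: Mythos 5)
Your proof is correct and follows essentially the same route as the paper: both apply the standard Chernoff bound with the substitution $(1+\delta)\mu = W-\beta$ and then do the same exponent algebra. The only cosmetic difference is that the paper reduces to the case $\mu=\alpha W$ up front by observing the right-hand side is increasing in $\alpha$, whereas you keep $\mu$ general and recover the same inequality at the end via monotonicity of $g(t)=te^{1-t}$ on $(0,1]$; you also explicitly flag the $\{0,\beta_i\}$ vs.\ $[0,\beta]$ mismatch with Theorem~\ref{thm:std-chern}, which the paper leaves implicit.
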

  \begin{proof}
    Since the right-hand side is increasing in $\alpha$, it suffices to 
    assume $\mu = \alpha W$. Choose $\delta$ such that 
    $(1+\delta)\mu = W - \beta$. Then $\delta = (W - \beta - \mu)/\mu$. 
    Because $\mu = \alpha W$ and since $(1 - \alpha)W > \beta$, we 
    have $\delta = ((1-\alpha)W - \beta)/\mu > 0$. We apply the standard 
    Chernoff bound in Theorem~\ref{thm:chern} to obtain
    \[
      \Pr\left[\sum_iX_i > W- \beta\right] 
      = 
      \Pr\left[\sum_i X_i > (1+\delta)\mu\right] 
      \le 
      \left(\frac{e^\delta}{(1+\delta)^{1+\delta}}\right)^{\mu/\beta}.
    \]
    Because $1 + \delta = (W - \beta)/\mu$ and 
    $\delta = (W - \beta - \mu)/\mu$,
    \[
      \left(\frac{e^\delta}{(1+\delta)^{1+\delta}}\right)^{\mu/\beta}
      = 
      \left(\frac{e^{W - \beta - \mu}}
        {((W - \beta)/\mu)^{W-\beta}}\right)^{1/\beta}.
    \]
    Exponentiating the denominator,
    \[
      \left(\frac{e^{W - \beta - \mu}}
        {((W - \beta)/\mu)^{W-\beta}}\right)^{1/\beta}
      = \exp\left(\frac{1}{\beta} \left(W - \beta - \mu + 
        (W-\beta)\ln\left( \frac{\mu}{W-\beta} \right) \right) \right)
    \]
    As $\mu = \alpha W$,
    \begin{multline*}
      \exp\left(\frac{1}{\beta} \left(W - \beta - \mu + 
        (W-\beta)\ln\left( \frac{\mu}{W-\beta} \right) \right) \right) 
      = \exp\left(\frac{1}{\beta} \left((1-\alpha)W - \beta  
        + (W-\beta)\ln\left( \frac{\alpha W}{W-\beta} \right) \right) \right)
    \end{multline*}
    We can rewrite the exponent to show that
    \[
      \exp\left(\frac{1}{\beta} \left((1-\alpha)W - \beta 
        - (W-\beta)\ln\left( \frac{W-\beta}{\alpha W} \right) \right) \right)
      \le \left(\frac{\alpha e^{1-\alpha}W}
        {W - \beta}\right)^{(W-\beta)/\beta}.
    \]
  \end{proof}

  The following three lemmas are used in the proofs bounding the
  rejection probabilities for different regimes of width. The
  inequalities are easily verified via calculus. The proofs are included
  for the sake of completeness. 

  \begin{lemma}\label{lem:first-ineq}
    Let $x \in (0,1]$. Then $(1/e^{1/e})^{1/x} \le x$.
  \end{lemma}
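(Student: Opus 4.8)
The plan is to prove the elementary inequality $(1/e^{1/e})^{1/x} \le x$ for $x \in (0,1]$ by reducing it to a single-variable optimization that is natural to analyze with calculus. First I would take logarithms: since both sides are positive, the claim $(e^{-1/e})^{1/x} \le x$ is equivalent to $-\frac{1}{ex} \le \ln x$, i.e. to showing $f(x) := \ln x + \frac{1}{ex} \ge 0$ for all $x \in (0,1]$. This is the form I would actually work with.

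Next I would locate the minimum of $f$ on $(0,1]$. Differentiating, $f'(x) = \frac{1}{x} - \frac{1}{ex^2} = \frac{ex - 1}{ex^2}$, which vanishes exactly at $x = 1/e$, is negative on $(0,1/e)$ and positive on $(1/e,1)$. Hence $f$ is decreasing then increasing on $(0,1]$, so its global minimum on that interval is attained at $x = 1/e$. Evaluating, $f(1/e) = \ln(1/e) + \frac{1}{e \cdot (1/e)} = -1 + 1 = 0$. Therefore $f(x) \ge 0$ on all of $(0,1]$, with equality precisely at $x = 1/e$, which is exactly the desired inequality after exponentiating back.

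There is essentially no obstacle here — the only things to be slightly careful about are the direction of the inequality when taking logs (fine, since $\ln$ is monotone increasing and $x > 0$) and the behavior as $x \to 0^+$, where $f(x) \to +\infty$ so no boundary issue arises. One could alternatively phrase it without calculus by substituting $x = 1/t$ with $t \ge 1$ and using the standard bound $\ln t \le (t-1)/e \cdot (\text{something})$, but the clean critical-point argument above is the most transparent, so that is the route I would present. I would write it as two or three lines: reduce to $f(x) \ge 0$, compute $f'$ and identify $x = 1/e$ as the unique minimizer, and check $f(1/e) = 0$.
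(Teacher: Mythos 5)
Your proof is correct and takes essentially the same route as the paper: both reduce the claim by taking logarithms and then locate a minimum at $x = 1/e$ via elementary calculus. The only cosmetic difference is that the paper multiplies through by $x$ to phrase the reduced inequality as $x \ln x \ge -1/e$ (and cites convexity of $x\ln x$), while you keep it as $\ln x + \tfrac{1}{ex} \ge 0$ and analyze $f'$ directly; the two are related by a factor of $x$ and the substance is identical.
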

  \begin{proof}
    Taking logs of both sides of the stated inequality and rearranging,
    it suffices to show that $\ln (1/e^{1/e}) \le x\ln x$ for 
    $x > 0$. $x \ln x$ is convex and its minimum is $-1/e$ at 
    $x = 1/e$. Since $\ln (1/e^{1/e}) = -1/e$, the inequality holds.
  \end{proof}
  
  \begin{lemma}\label{lem:ineq-3}
    Let $y \ge 2$ and $x \in (0, 1]$ . Then $x/y \ge (1/e^{2/e})^{y/2x}$.
  \end{lemma}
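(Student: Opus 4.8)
The plan is to mirror the proof of Lemma~\ref{lem:first-ineq}: take logarithms of both sides and reduce the claim to a one-variable inequality that can be settled by elementary calculus. Since both sides of $x/y \ge (1/e^{2/e})^{y/2x}$ are strictly positive, applying $\ln(\cdot)$ shows the statement is equivalent to $\ln(x/y) \ge \frac{y}{2x}\ln(1/e^{2/e}) = -\frac{y}{ex}$, using $\ln(1/e^{2/e}) = -2/e$.

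Next I would introduce the substitution $u = y/x$. Because $y \ge 2$ and $x \in (0,1]$ we have $u \ge 2 > 0$; moreover $\ln(x/y) = -\ln u$ and $\frac{y}{ex} = \frac{u}{e}$, so the desired inequality becomes $\frac{u}{e} \ge \ln u$.

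Finally, it remains to verify $\frac{u}{e} \ge \ln u$ for $u > 0$ (in particular for $u \ge 2$). Setting $g(u) = \frac{u}{e} - \ln u$, we get $g'(u) = \frac{1}{e} - \frac{1}{u}$, which is negative for $u < e$ and positive for $u > e$, so $g$ attains its global minimum at $u = e$, where $g(e) = 1 - 1 = 0$. Hence $g(u) \ge 0$ on all of $(0,\infty)$, which gives the claim; note that the hypothesis $y \ge 2$ is not even essential once the inequality is put in this form.

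The only step needing care — and it is merely bookkeeping — is keeping the direction of the inequalities straight when exponentiating and substituting, since $\ln(1/e^{2/e}) = -2/e$ is negative; after that the calculus is immediate, so I do not anticipate any real obstacle.
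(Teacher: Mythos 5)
Your proof is correct and follows essentially the same route as the paper: take logarithms, substitute a single variable, and settle the resulting one-variable inequality by elementary calculus. The only cosmetic difference is the substitution — you use $u = y/x$ and analyze $u/e - \ln u$, while the paper uses $z = x/y$ and invokes the minimum of $z\ln z$ at $z = 1/e$ — but these are the same inequality under $u = 1/z$, and both yield the bound with the same minimizer; your side remark that $y \ge 2$ is not actually needed is also consistent with the paper's argument, which likewise only uses that $z > 0$.
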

  \begin{proof}
    We start with a simple rewriting of the statement. After taking logs 
    and rearranging, it is sufficient to show
    \[
      (x/y) \ln (x/y) 
      \ge 
      (1 / 2)\ln (1/e^{2/e})
      = -1/e.
    \]
    Replacing $x/y$ with $z$, we see that it suffices the prove
    $z \ln z \ge -1/e$ for $0 < z \le 1/2$. We note that $x \ln x$ is convex 
    and its minimum is $-1/e$ at $x = 1/e$. Thus, $z \ln z \ge -1/e$. This 
    concludes the proof.
  \end{proof}
  
  \begin{lemma}\label{lem:exp-ineq}
    Let $0 < \eps \le 1$ and  $x \in (0, 1]$. Then 
    $\eps x /2 \ge (\eps/e^{2/e})^{1/x}$.
  \end{lemma}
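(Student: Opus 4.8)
The plan is to take logarithms of both sides and reduce the statement to a one-variable convexity fact that already appears in Lemmas~\ref{lem:first-ineq} and~\ref{lem:ineq-3}, namely that $t\ln t \ge -1/e$ for all $t > 0$ (minimum attained at $t = 1/e$). Both sides of the claimed inequality $\eps x/2 \ge (\eps/e^{2/e})^{1/x}$ are strictly positive, so by monotonicity of $\ln$ it is equivalent to $\ln(\eps x/2) \ge \frac{1}{x}\ln(\eps/e^{2/e})$. Multiplying through by $x > 0$ and expanding $\ln(\eps x/2) = \ln\eps + \ln(x/2)$, this rearranges to
\[
x\ln(x/2) \ge (1-x)\ln\eps - \frac{2}{e}.
\]

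Next I would peel off the $\eps$-dependence using the hypotheses. Since $0 < \eps \le 1$ we have $\ln\eps \le 0$, and since $0 < x \le 1$ we have $1-x \ge 0$; hence $(1-x)\ln\eps \le 0$, so it suffices to prove the stronger, $\eps$-free inequality $x\ln(x/2) \ge -2/e$.

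Finally, I would write $x\ln(x/2) = 2\cdot\frac{x}{2}\ln\frac{x}{2}$ and apply the bound $t\ln t \ge -1/e$ with $t = x/2 > 0$, which yields $x\ln(x/2) \ge -2/e$ and completes the proof. I do not expect a genuine obstacle here; the only points requiring care are the direction of the inequality when multiplying by $x$ and when invoking $\ln\eps \le 0$, plus recording (as in the earlier lemmas) that $t\ln t$ is convex with minimum $-1/e$ at $t = 1/e$.
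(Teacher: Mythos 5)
Your proof is correct and noticeably cleaner than the paper's. The key difference is how the $\eps$-dependence is handled. The paper performs a change of variables (replacing $\eps/2$ by $\eps$ and $x$ by $x/\eps$) to arrive at $x\ln x \ge \eps\ln(\eps/d)$ with $d = e^{2/e}/2$, and then argues by a two-case analysis on $\eps$ using monotonicity properties of $y\ln y$ and $y\ln(y/d)$. You instead rearrange to $x\ln(x/2) \ge (1-x)\ln\eps - 2/e$ and observe that, under the hypotheses $0 < \eps \le 1$ and $0 < x \le 1$, the term $(1-x)\ln\eps$ is nonpositive and can simply be discarded, reducing the two-variable inequality to the $\eps$-free bound $x\ln(x/2) \ge -2/e$. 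That bound in turn follows immediately from $t\ln t \ge -1/e$ with $t = x/2$, the same convexity fact the paper already invokes in Lemmas~\ref{lem:first-ineq} and~\ref{lem:ineq-3}. Your route avoids the change of variables and the casework entirely, and it makes the unifying role of the bound $t\ln t \ge -1/e$ across all three auxiliary lemmas explicit; the paper's approach is logically equivalent but more circuitous.
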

  \begin{proof}
    To start, let $d = e^{2/e}/2$ and observe that $d > 1$. We first do a 
    change of variables, replacing $\eps / 2$ with $\eps$ and $x$ with 
    $x/\eps$. If we take a $\log$ of both sides, then our reformulated 
    goal is to show that
    \[
      x\ln x 
      \ge 
      \eps\ln(\eps/d)
    \]
    for $0 < \eps \le 1/2$ and $x \in (0,\eps]$. Letting $f(y) = y \ln y$ 
    and $g(y) = y \ln (y/d)$, we want to show that $f(x) \ge g(\eps)$. 
    We will proceed by cases.

    First, suppose $0 < \eps \le d/e$. It is easy to show that $f$ is 
    decreasing on $(0, 1/e]$ and increasing on $[1/e, \infty)$ and that 
    $g$ is decreasing on $(0,d/e]$ and increasing on $[d/e, \infty)$. As 
    $f$ is decreasing on $(0,1/e]$, for $0 < \eps \le 1/e$, we have 
    $f(x) \ge f(\eps)$ as $x \le \eps$. As $d > 1$, it follows that 
    $f(\eps) \ge g(\eps)$. Therefore, $f(x) \ge g(\eps)$ for 
    $0 < \eps \le 1/e$. Furthermore, as $g$ is decreasing on $[1/e,d/e]$
    and $f$ is increasing on $[1/e,d/e]$, we have $f(x) \ge g(\eps)$ for 
    $0 < \eps \le d/e$.

    For the second case, suppose $d/e < \eps \le 1/2$. Note that the 
    minimum of $f$ on the interval $(0, 1/2]$ is $f(1/e)= -1/e$. Thus, it 
    would suffice to show that $g(\eps) \le -1/e$. As we noted previously 
    that $g$ is increasing on $[d/e, 1/2]$, it would suffice to show that 
    $g(1/2) \le -1/e$. By definition of $g$, we see $g(1/2) = -1/e$. 
    Therefore, $f(x) \ge g(\eps)$. This concludes the proof.
  \end{proof}

\section{Skipped Proofs}
  \subsection{Proof of Lemma~\ref{lem:W2-strong}}
    \begin{proof}
      The proof proceeds similarly to the proof of 
      Lemma~\ref{lem:W2-weak}. Since $\alpha_1 < 1/2$, everything up 
      to and including the application of the Chernoff bound there 
      applies. This gives that for each $i\in[m]$ and $j \in [n]$,
      \[
        \Pr[E_{ij} | X_j = 1] \le \left(2e\alpha_1\right)^{(W-A_{i,j})/A_{i,j}}.
      \]
      By choice of $\alpha_1$, we have
      \[
        \left(2e\alpha_1\right)^{(W-A_{i,j})/A_{i,j}} 
        =  
        \left(\frac{1}{2e^{2/e}(1 +\Delta_1/W)^{1/(W-1)}}\right)
          ^{(W-A_{i,j})/A_{i,j}}
      \]
      We prove the final inequality in two parts. First, note that 
      $\frac{W - A_{i,j}}{A_{i,j}} \ge W - 1$ since $A_{i,j} \le 1$. Thus,
      \[
        \left(\frac{1}{2(1 +\Delta_1/W)^{1/(W-1)}}\right)
          ^{(W-A_{i,j})/A_{i,j}}
        \le \frac{1}{2^{W-1}(1 + \Delta_1/W)}
        \le \frac{W}{2\Delta_1}.
      \]
      Second, we see that
      \[
        \left(\frac{1}{e^{2/e}}\right)^{(W-A_{i,j})/A_{i,j}}
        \le 
        \left(\frac{1}{e^{2/e}}\right)^{W/2A_{i,j}}
        \le  
        \frac{A_{i,j}}{W}
      \]
      for $A_{i,j} \le 1$, where the first inequality holds because $W \ge 2$ 
      and the second inequality holds by Lemma~\ref{lem:ineq-3}.
    \end{proof}

  \subsection{Proof of Lemma~\ref{lem:bigW}}
    \begin{proof}
      Renumber indices so that $A_{i,1} \le \cdots \le A_{i,n}$ and if the 
      index of $j$ changes to $j'$, we still refer to $j'$ as $j$. Let 
      $Y_{ij} = \sum_{\ell = 1}^{j-1} A_{i,\ell} \xi_\ell$ where $\xi_\ell = 1$ if 
      $x_\ell' = 1$ and $0$ otherwise. We first note that
      \[
        \Pr[E_{ij} | X_j = 1] \le \Pr[Y_{ij} > W - A_{i,j}].
      \]
      
      By the choice of $\alpha_1$ and the fact that $A_{i,j} \le 1$ and 
      $W = \frac{2}{\eps^2}\ln(\frac{\Delta_1}{\eps}) + 1$, we have 
      $((1- \alpha_1)W)/A_{i,j} \ge \eps W = \frac{2}{\eps}
      \ln(\frac{\Delta_1}{\eps}) + \eps$. A direct argument via calculus 
      shows $\frac{2}{\eps}\ln(\frac{\Delta_1}{\eps}) + \eps > 1$ for 
      $\eps \in (0,\frac{1}{e})$. Thus, $(1-\alpha_1)W > A_{i,j}$.
      
      By the LP constraints, $\E[Y_{ij}] \le \alpha_1 W$. Then as 
      $A_{i,j'} \le A_{i,j}$ for all $j' < j$, we can apply the Chernoff bound 
      in Theorem~\ref{thm:chern} to obtain
      \[
        \Pr[Y_{ij} \ge W - A_{i,j}] 
        \le 
        \left(\frac{\alpha_1 e^{1-\alpha_1} W}{W - A_{i,j}}\right)
          ^{(W-A_{i,j})/A_{i,j}}.
      \]
      As $A_{i,j} \le 1$,
      \[
        \left(\frac{W}{W- A_{i,j}}\right)^{(W-A_{i,j})/A_{i,j}} 
        \le 
        \left(\frac{W}{W- 1}\right)^{W - 1} \le e,
      \]
      where the last inequality follows from the fact that 
      $(1-1/z)^{z-1} \ge 1/e$ for all $z \ge 1$. Then
      \[
        \left(\frac{\alpha_1 e^{1-\alpha_1} W}{W - A_{i,j}}\right)
          ^{(W-A_{i,j})/A_{i,j}}
        \le 
        e\cdot\left(\alpha_1 e^{1-\alpha_1}\right)^{(W-A_{i,j})/A_{i,j}}.
      \]
      By the choice of $\alpha_1$,
      \[
        e\cdot\left(\alpha_1 e^{1-\alpha_1}\right)^{(W-A_{i,j})/A_{i,j}}
        =
        e\cdot\left((1-\eps)e^{\eps}\right)^{(W-A_{i,j})/A_{i,j}}.
      \]
      For $0 < \eps < \frac{1}{e}$, we have 
      $1 - \eps \le \exp(-\eps - \frac{\eps^2}{2})$. As 
      $W = \frac{2}{\eps^2}\ln(\frac{\Delta_1}{\eps}) + 1$ and 
      $A_{i,j} \le 1$,
      \[
        e\cdot\left((1-\eps) e^\eps\right)^{(W-A_{i,j})/A_{i,j}}
        \le 
        e\cdot\left(e^{-\eps^2/2}\right)^{\frac{2}{\eps^2} 
          \ln(\frac{\Delta_1}{\eps})}
        \le 
        e\cdot\exp\left(-\frac{\ln(\frac{\Delta_1}{\eps})}{A_{i,j}}\right).
      \]
      Observe that $\frac{1}{A_{i,j}} - \ln(\frac{e}{A_{i,j}}) \ge 0$. For 
      $A_{i,j} \in [0,1]$, a direct argument shows 
      $\frac{\ln(t)}{A_{i,j}} - \ln(\frac{t}{A_{i,j}})$ is increasing in $t$ for 
      $t \ge e$. As $\Delta_1 / \eps > e$, we have 
      $\frac{\ln(\frac{\Delta_1}{\eps})}{A_{i,j}} \ge \ln(\frac{\Delta_1}
      {\eps A_{i,j}})$. Therefore,
      \[
        e\exp\left(-\frac{\ln(\frac{\Delta_1}{\eps})}{A_{i,j}}\right)
        \le 
        e\exp\left(-\ln\left(\frac{\Delta_1}{\eps A_{i,j}}\right)\right)
        = 
        \frac{e\eps A_{i,j}}{\Delta_1}.
      \]
      This concludes the proof.
\end{proof}

\subsection{Proof of Lemma~\ref{lem:small-vio}}
  \begin{proof}
    Renumber the indices so that $A_{i,1} \le \cdots \le A_{i,n}$. Note 
    that the index $j$ might have changed to $j'$ but for simplicity we
    refer to $j'$ as $j$. Let $\xi_\ell = 1$ if $x_\ell' = 1$ and $0$ 
    otherwise. Let $Y_{ij} = \sum_{\ell = 1}^{j-1} A_{i,\ell} \xi_\ell$. 
    We have
    \[
      \Pr[E_{ij} | X_j = 1] \le \Pr[Y_{ij} \ge \eps - A_{i,j}].
    \]

    Let $A_{i,\ell}' = \frac{2}{\eps}\cdot A_{i,\ell}$ for $\ell \in [j]$. As 
    $A_{i,\ell} \le \eps/2$ for all $\ell \in [j]$, we have $A_{i,\ell}'\in [0,1]$. 
    Let $Y_{ij}' = \sum_{\ell = 1}^{j-1}A_{i,\ell}' \xi_\ell$. Then
    \[
      \Pr[Y_{ij} \ge \eps - A_{i,j}] = \Pr[Y_{ij}' \ge 2 - A_{i,j}'].
    \]
    To upper bound $\E[Y_{ij}']$, we have
    \[
      \E[Y_{ij}'] 
      = 
      \sum_{\ell = 1}^{j-1}A_{i,\ell}'\cdot \Pr[X_\ell = 1] 
      \le 
      \frac{2\alpha_2}{\eps}\sum_{\ell = 1}^{n} A_{i,\ell}x_\ell 
      \le 
      \frac{2\alpha_2(1+\eps)}{\eps} 
      = 
      \frac{2\eps(1+\eps)}{c_3\Delta_1}.
    \]
    Let $\alpha_2' = \frac{2\eps}{c_3\Delta_1}$ and $W = 2$. Then 
    $\E[Y_{ij}'] \le \alpha_2'W$. As $\alpha_2' < 1/2$ and $A_{i,j}' \le 1$, 
    we see that $((1-\alpha)W)/A_{i,j}' > 1$. Therefore, as 
    $A_{i,\ell}' \le A_{i,j}'$ for all $\ell < j$, we can apply the Chernoff 
    bound in Theorem~\ref{thm:chern} to obtain
    \[
      \Pr[Y_{ij}' \ge 2 - A_{i,j}'] 
      \le 
      \left(\frac{\alpha_2'e^{1-\alpha_2'}W}
        {W - A_{i,j}'}\right)^{(W-A_{i,j}')/A_{i,j}'}.
    \]
    Observe that $e^{1-\alpha_2'} \le e$ and $\frac{W}{W- A_{i,j}'} \le 2$ 
    since $W = 2$ and $A_{i,j}' \le 1$. By our choice of $\alpha_2'$,
    \[
      \left(\frac{\alpha_2'e^{1-\alpha_2'}W}
        {W - A_{i,j}'}\right)^{(W-A_{i,j}')/A_{i,j}'}
      \le 
      \left(2e\alpha_2'\right)^{(W-A_{i,j}')/A_{i,j}'}
      =
      \left(\frac{\eps}{2e^{2/e}\Delta_1}\right)^{(W-A_{i,j}')/A_{i,j}'}
    \]

    We prove the final inequality in two parts. First, we note that 
    $\frac{W-A_{i,j}'}{A_{i,j}'} \ge 1$ since $W = 2$ and $A_{i,j}' \le 1$. 
    Then
    \[
      \left(\frac{1}{2\Delta_1}\right)^{(W-A_{i,j}')/A_{i,j}'} 
      \le 
      \frac{1}{2\Delta_1}.
    \]
    Second, we observe $\frac{W-A_{i,j}'}{A_{i,j}'} \ge 1/A_{i,j}'$ since 
    $W = 2$ and $A_{i,j}' \le 1$. Then we can apply 
    Lemma~\ref{lem:exp-ineq} to obtain
    \[
      (\eps/e^{2/e})^{(W-A_{i,j}')/A_{i,j}'} 
      \le 
      (\eps/e^{2/e})^{1/A_{i,j}'}
      \le 
      \frac{\eps A_{i,j}'}{2}.
    \]
    We have shown $\Pr[E_{ij} | X_j  = 1]\le \frac{\eps A_{i,j}'}{4\Delta_1}$. 
    Since $A_{i,j}' = A_{i,j}\cdot \frac{2}{\eps}$, the result follows.
  \end{proof}
\end{document}